\documentclass[pdflatex,iicol,sn-mathphys-ay]{sn-jnl}

\usepackage{graphicx}%
\usepackage{multirow}%
\usepackage{amsmath,amssymb,amsfonts}%
\usepackage{amsthm}%
\usepackage{mathrsfs}%
\usepackage[title]{appendix}%
\usepackage{xcolor}%
\usepackage{textcomp}%
\usepackage{manyfoot}%
\usepackage{booktabs}%
\usepackage{algorithm}%
\usepackage{algorithmicx}%
\usepackage{algpseudocode}%
\usepackage{listings}%
\usepackage{float}
\usepackage[nolist]{acronym}

\theoremstyle{thmstyleone}%
\newtheorem{theorem}{Theorem}
\newtheorem{proposition}[theorem]{Proposition}%
\newtheorem{corollary}[theorem]{Corollary}

\theoremstyle{thmstyletwo}%

\theoremstyle{thmstylethree}%

\begin{document}

\begin{acronym}
    \acro{CTMC}{Continuous-Time Markov Chain}
    \acro{SIR}{Susceptible-Infected-Recovered}
    \acro{MC}{Monte Carlo}
    \acro{RRMSE}{Relative Root Mean Squared Error}
    \acro{LHS}{Latin Hypercube Sampling}
    \acro{RFE}{Relative Frobenius Error}
\end{acronym}

\title{Learning Moment Maps for Continuous-Time Markov Chains under Monte Carlo Noise}
\author*[1]{\fnm{Madison} \sur{Pratt}}\email{mpratt10@vols.utk.edu, ORCID: 0009-0001-6975-8739}
\author[1]{\fnm{Olivia} \sur{Prosper Feldman}}\email{ofeldman@utk.edu, ORCID: 0000-0002-8042-0616}
\affil*[1]{\orgdiv{Department of Mathematics}, \orgname{University of Tennessee}, \orgaddress{\street{1403 Circle Drive}, \city{Knoxville}, \postcode{37996}, \state{TN}, \country{USA}}}

\abstract{Continuous-time Markov Chains are widely used to model stochastic dynamical systems, but key summary quantities such as means and covariances are often intractable. While Monte Carlo sampling provides asymptotically exact estimates, it becomes computationally prohibitive when moments must be evaluated across many parameter values. We develop a simulation-based surrogate modeling framework that learns parameter-to-moment mappings from Monte Carlo-derived, noise-corrupted training targets, enabling efficient and accurate approximation across the parameter space. We show that Monte Carlo noise affects mean estimation primarily through additive variance, whereas covariance estimation is additionally impacted by bias arising from nonlinear transformations of empirical estimates. Using a stochastic Susceptible-Infected-Recovered model, we demonstrate that neural networks accurately learn both mean and covariance under fixed simulation budgets allocated to constructing the noisy training labels. We further characterize how to allocate computational resources between parameter-space coverage and Monte Carlo replication, showing that covariance estimation requires a balanced allocation to control both variance and bias, while mean estimation benefits more from increased parameter space coverage. Finally, we show that the learned moment mappings produce valid population-level quantities and perform well in downstream tasks such as whitening. These results highlight the importance of accounting for Monte Carlo noise in surrogate modeling and provide practical guidance for simulation-based learning in stochastic systems.}

\keywords{Continuous-time Markov Chains, Monte Carlo sampling, moment estimation, surrogate modeling, stochastic simulation}



\maketitle

\section{Introduction}\label{intro}

\acp{CTMC} provide a natural framework for modeling stochastic dynamical systems. In these models, the system evolves over a discrete state space in continuous time, with transitions between states occurring after random waiting times, and with future evolutions depending only on the current state. The dynamics are specified through transition rates, which determine both the waiting time until the next transition and the probability of each possible state change. These rates define the infinitesimal generator of the process. This generator in turn gives rise to the master equation, which describes the time evolution of the full probability distribution over states \citep{norris1997markov, anderson1991ctmc}. \ac{CTMC} models arise in a wide range of applications, including stochastic epidemic models such as the \ac{SIR} model, where the state tracks susceptible, infected, and recovered individuals evolving through transmission and recovery events \citep{allen2008epidemic, bailey1975infectious}, as well as chemical reaction networks in which molecule counts change through reaction events \citep{anderson2015biochemical}.

In many applications, the primary quantities of interest for summarizing the dynamics of a \ac{CTMC} are low-order moments such as the mean, variance, and covariance. Although these can, in principle, be derived from the master equation, the resulting moment equations form an infinite hierarchy in which lower-order moments depend on higher-order ones, preventing closed-form solutions in general \citep{anderson2015biochemical,norris1997markov}.

One approach to addressing this challenge is to apply moment closure techniques, which truncate the hierarchy and approximate higher-order moments using assumptions about the underlying distribution. A wide range of closure methods have been proposed, including classical distribution-based closures, maximum entropy approaches, and more recent control-based formulations \citep{schnoerr2017approximation, smadbeck2013closure, wagner2023control}. Despite these advances, moment closure methods remain approximate and can introduce bias or even produce non-physical behavior in strongly nonlinear or highly stochastic regimes \citep{schnoerr2017approximation}.

An alternative approach is to estimate the moments empirically using \ac{MC} simulation. In this setting, sample trajectories of the \ac{CTMC} are generated using stochastic simulation algorithms such as the Gillespie algorithm \citep{gillespie1977exact}, and moments are approximated by averaging across realizations. \ac{MC} methods are asymptotically exact, in the sense that empirical estimates converge to the true moments as the number of simulated trajectories increases \citep{casella2002statistical, vandervaart1998asymptotic}. However, when only a finite number of simulations are available, the resulting estimates can exhibit high variance and may fail to reflect the true population-level moments.

This burden becomes particularly severe when reliable moment estimates are required across a range of parameter values, as the cost of performing many \ac{MC} simulations at each parameter scales poorly with the dimension and resolution of the parameter space. Under a fixed computational budget, this leads to an inherent trade-off between exploring the parameter space densely and obtaining accurate moment estimates at each parameter value. Allocating more simulations per parameter improves statistical accuracy but limits coverage, while increasing parameter coverage reduces the number of simulations available per parameter and increases estimation noise. While methods such as multilevel \ac{MC} \citep{giles2015mlmc, anderson2011mlmc} can reduce the cost of simulation at individual parameter values, they still rely on generating large numbers of stochastic trajectories across the parameter space, and therefore do not eliminate this fundamental trade-off.

Our goal is to mitigate this trade-off by learning parameter-to-moment mappings with surrogate models under fixed computational resources used to generate the training data, where computational cost is measured in total \ac{CTMC} simulations. Surrogate models have been widely used to approximate parameter-to-output relationships in complex systems. For example, physics-informed neural networks learn mappings from parameters to solutions of differential equations \citep{raissi2019pinn, raissi2017pinn2}, while Gaussian process–based emulators model parameter-to-output relationships and provide predictive uncertainty through posterior variance \citep{sacks1989design}.

However, a key aspect of our setting is that the parameter-to-moment mapping is learned from sample moment targets generated using a finite number of \ac{CTMC} simulations at each parameter value. Consequently, the training targets are not direct evaluations of an underlying function but are noisy approximations whose variability is induced by the estimation procedure itself. This noise is not fixed and can be reduced by the user by increasing the number of simulations per parameter, although doing so incurs additional computational cost. The level of noise in the training data is therefore directly determined by how computational resources are allocated across the parameter space.

While prior work has considered learning mappings from inputs to noisy outputs, for example through heteroskedastic Gaussian process models for stochastic simulators \citep{binois2018hetgp, BinoisGramacy2021}, these approaches assume the existence of a fixed underlying function and model noise as an additive perturbation around it. In contrast, in our setting the noise is not an intrinsic property of the system but arises from the data construction process and can be actively controlled through the simulation budget. This leads to a fundamentally different learning problem in which one must balance the accuracy of the training targets against coverage of the parameter space to optimize moment learning at a fixed computational budget.

To summarize, in this work, we develop a simulation-based surrogate modeling framework for learning parameter-to-moment mappings for \ac{CTMC} models using training targets constructed from noisy Monte Carlo estimates. We focus on learning both the mean and covariance, and ensure that the resulting covariance estimates are valid at the population level by enforcing appropriate structural constraints. A key aspect of our setting is that Monte Carlo noise affects the learning problem differently for different moments given our choice of networks. For the mean, sampling variability behaves as additive noise around the true mapping, while for the covariance, nonlinear transformations used to enforce structural constraints introduce both variance and bias in the learned targets.

We study how training data should be constructed under a fixed computational budget, showing that error depends critically on how simulations are allocated across the parameter space. In particular, we characterize the trade-off between parameter coverage and \ac{MC} replication, and demonstrate this trade-off empirically in a stochastic SIR model, leading to distinct allocation strategies for effective learning of the mean and covariance. Using this model as a representative nonlinear \ac{CTMC} system, we show that the learned surrogate models accurately recover both mean and covariance across the parameter space under fixed computational budgets. We further demonstrate that the surrogate approach enables amortization of simulation cost by allowing efficient evaluation over dense parameter regions after training, and that the learned moments can be used effectively in downstream tasks, such as whitening, indicating that the mean and covariance are jointly consistent.

The remainder of this paper is organized as follows. First, we describe the construction of the training data using \ac{CTMC} simulation and the nonlinear transformations used to obtain valid covariance matrices. Next, we introduce the neural network frameworks and discuss the effect of \ac{MC} noise in the training labels on the resulting learning targets. We then define the stochastic \ac{SIR} experiment and experimental setup used throughout the paper. Finally, we present results demonstrating the ability of the proposed framework to accurately learn mean and covariance mappings under fixed computational budgets, analyze the trade-offs between parameter-space coverage and Monte Carlo replication, and evaluate the learned moments in downstream tasks.

\section{Methods}\label{methods}

\subsection{Problem formulation}

We consider stochastic dynamical systems modeled as \acp{CTMC}, parameterized by $\theta \in \Theta \subseteq \mathbb{R}^p$. For each $\theta$, the system defines a stochastic process $\{X(t;\theta) : t \geq 0\}$ taking values in a state space $\mathcal{X} \subseteq \mathbb{R}^d$. We assume a fixed initial condition $X(0;\theta) = x_0$, and write $X(t;\theta) = \bigl(X_1(t;\theta), \dots, X_d(t;\theta)\bigr)$.

We focus on a single component of the process. For a fixed index $r \in \{1,\dots,d\}$, we define $Y(t;\theta) = X_r(t;\theta)$ and consider its evaluation at a fixed collection of discrete time points, yielding a finite-dimensional random vector $\mathbf{Y}(\theta)$.

Our objective is to approximate the parameter-to-moment mappings of the induced distribution of $\mathbf{Y}(\theta)$. For each parameter value $\theta$, we define the population mean and covariance
\[
\mu(\theta) = \mathbb{E}[\mathbf{Y}(\theta)], \qquad \Sigma(\theta) = \mathrm{Cov}(\mathbf{Y}(\theta)).
\]
These quantities are not available in closed form and are estimated using \ac{MC} simulation.
\begin{figure*}[ht!]
    \centering
    \includegraphics[width=\textwidth]{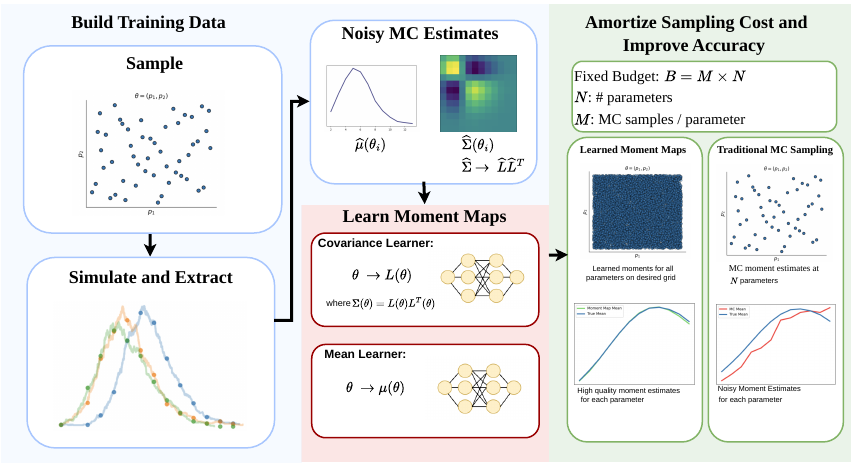}
    \caption{Schematic of the proposed parameter-to-moment mapping pipeline (Section~\ref{methods}). The left (blue) panel illustrates the generation of noisy \ac{MC} moment estimates via simulation. The middle (red) panel shows the learned parameter-to-moment mappings using neural surrogates for the mean and covariance. The right (green) panel demonstrates improved accuracy at a fixed computational budget compared to traditional \ac{MC} sampling.}
    \label{fig:Moment_Map}
\end{figure*}
Our approach proceeds in two stages. First, we sample parameter values $\{\theta_i\}_{i=1}^N$ from the parameter space $\Theta$. For each $\theta_i$, we generate \ac{MC} trajectories of the \ac{CTMC} and compute empirical estimates of the mean and covariance. Second, we use these parameter--moment pairs to train neural network surrogates that approximate the mappings $\theta \mapsto \mu(\theta)$ and $\theta \mapsto \Sigma(\theta)$. The resulting models enable evaluation of moments at new parameter values. Figure~\ref{fig:Moment_Map} provides a visualization of the overall procedure. We now describe each component of this pipeline in detail.

\subsection{Simulation-based estimation of moments}
\label{methods: training_construction}
\subsubsection{Parameter sampling}

To obtain representative coverage of the parameter space, we sample $N$ parameter values $\{\theta_i\}_{i=1}^N$ using \ac{LHS} \citep{mckay1979comparison} over $\Theta$. Under this scheme, each parameter dimension is partitioned into $N$ equally sized intervals under a uniform distribution, and one sample is drawn from each interval. The samples across dimensions are then randomly paired to form $N$ parameter vectors $\theta_i$. This construction ensures that the projections of the sample onto each coordinate axis are uniformly distributed, providing space-filling coverage of the parameter domain without clustering.

\subsubsection{CTMC simulation}

For each sampled parameter value $\theta_i$, $i = 1, \dots, N$, we generate $M$ independent sample paths of the process $X(\cdot;\theta_i)$, all initialized from the same initial condition, denoted by $\{X^{(m)}(\cdot;\theta_i)\}_{m=1}^M$. This is done using the Gillespie stochastic simulation algorithm \citep{gillespie1977exact} which is briefly described below. 

Starting from the initial condition, the system evolves through a sequence of discrete events (such as reactions or transitions), each associated with a rate that depends on the current state. These rates quantify how frequently each possible event is expected to occur per unit time. At any given state, the rates of all possible events are summed to obtain the total rate at which the system changes.

The waiting time until the next event is drawn from an exponential distribution with this total rate, so that higher overall rates correspond to shorter waiting times. Once the time of the next event is determined, a specific event is selected with probability proportional to its rate, and the system state is updated accordingly. The simulation time is then advanced by the sampled waiting time, and the procedure is repeated.

This produces trajectories that evolve through discrete jumps at random times, yielding piecewise constant sample paths consistent with the stochastic dynamics of the \ac{CTMC}. In particular, for each realization $X^{(m)}(\cdot;\theta_i)$ where $m = 1, \dots, M$, there exists a sequence of random jump times $\{\tau_k^{(m)}\}_{k \geq 1}$ such that $0 < \tau_1^{(m)} < \tau_2^{(m)} < \dots$ and $\tau_k^{(m)} \to \infty$. Between successive jump times, the process remains constant, so that for all $t \in [\tau_k^{(m)}, \tau_{k+1}^{(m)})$, $X^{(m)}(t;\theta_i) = X^{(m)}(\tau_k^{(m)};\theta_i)$.

\subsubsection{Trajectory extraction}
\label{trajectory_extraction}
In many applications, we are not interested in the full state vector, but rather in a single component of the process. For a fixed index $r \in \{1,\dots,d\}$, we define $Y^{(m)}(t;\theta_i) = X_r^{(m)}(t;\theta_i)$. Although the process evolves in continuous time, observations are typically made at discrete time points, so we consider a fixed collection of evaluation times $0 = t_0 < t_1 < \dots < t_T$.

To construct a finite-dimensional representation of each trajectory, we evaluate each sample path at these time points using left limits, defining $\mathbf{Y}^{(m)}(\theta_i) = \bigl(Y^{(m)}(t_j^{-};\theta_i)\bigr)_{j=1}^T \in \mathbb{R}^T$, where $Y^{(m)}(t_j^{-};\theta_i) = \lim_{t \nearrow t_j} Y^{(m)}(t;\theta_i)$. This choice avoids ambiguity when a jump occurs exactly at a sampling time, and each trajectory is therefore represented as a $T$-dimensional vector.

Because the initial condition is fixed, $Y^{(m)}(t_0;\theta_i)$ is deterministic and identical across all realizations. We therefore exclude $t_0$ from the representation so that $\mathbf{Y}^{(m)}(\theta_i)$ captures only the stochastic variation of the process across the $M$ simulated trajectories.

\subsubsection{Monte Carlo estimators}

Given the $M$ simulated trajectories $\{\mathbf{Y}^{(m)}(\theta_i)\}_{m=1}^M$ for a fixed parameter value $\theta_i$, we estimate the mean and covariance of the trajectory distribution using standard Monte Carlo estimators.
The sample mean is defined as
\begin{equation}
\hat{\mu}_M(\theta_i) = \frac{1}{M} \sum_{m=1}^M \mathbf{Y}^{(m)}(\theta_i),
\label{eq:mc_mean}
\end{equation}
and provides an unbiased (see Appendix~\ref{app:mean_unbiased_app} for details) estimate of the population mean
\begin{equation}
\mu(\theta_i) = \mathbb{E}[\mathbf{Y}(\theta_i)],
\label{eq:pop_mean}
\end{equation}
The sample covariance is defined as
\begin{equation}
\begin{aligned}
\hat{\Sigma}_M(\theta_i)
&= \frac{1}{M-1} \sum_{m=1}^M 
\Big( \mathbf{Y}^{(m)}(\theta_i) - \hat{\mu}_M(\theta_i) \Big) \\
&\quad \times
\Big( \mathbf{Y}^{(m)}(\theta_i) - \hat{\mu}_M(\theta_i) \Big)^\top
\end{aligned}
\label{eq:mc_cov}
\end{equation}
and provides an unbiased (see Appendix~\ref{app:cov_unbiased_app} for details) estimate of the population covariance 
\begin{equation}
\Sigma(\theta_i) = \mathrm{Cov}(\mathbf{Y}(\theta_i)),
\label{eq:pop_cov}
\end{equation}

\subsection{Learning parameter-to-moment mappings}

We train separate neural network surrogates to approximate the mappings $\theta \mapsto \mu(\theta)$ and $\theta \mapsto \Sigma(\theta)$ using the Monte Carlo estimators defined in \eqref{eq:mc_mean} and \eqref{eq:mc_cov} as noisy targets. We approximate the mappings in each case using fully connected multi-layer perceptrons (MLPs). The input to each model is the parameter vector $\theta \in \mathbb{R}^p$, and the outputs correspond to either the mean vector or a parameterization of the covariance matrix. Inputs and outputs are standardized using statistics computed from the training data.

\subsubsection{Mean mapping}

Given parameter samples $\{\theta_i\}_{i=1}^N$ and corresponding Monte Carlo estimates $\{\hat{\mu}_M(\theta_i)\}_{i=1}^N$ defined in \eqref{eq:mc_mean}, we train a neural network $f_\mu : \mathbb{R}^p \to \mathbb{R}^T$ to approximate $\mu(\theta)$ defined in (\ref{eq:pop_mean}).

We train the network by minimizing the empirical risk
\begin{equation}
\hat{R}_{N,M}(f_\mu)
=
\frac{1}{N} \sum_{i=1}^N
\| f_\mu(\theta_i) - \hat{\mu}_M(\theta_i)\|_2^2,
\label{eq:mean_loss}
\end{equation}
where $\hat{\mu}_M(\theta_i)$ is the Monte Carlo estimator defined in \eqref{eq:mc_mean}.

\subsubsection{Covariance mapping}

We next consider learning the parameter-to-covariance mapping $\theta \mapsto \Sigma(\theta)$. For each sampled parameter $\theta_i$, we construct the empirical covariance $\hat{\Sigma}_M(\theta_i)$ using \eqref{eq:mc_cov}. In exact arithmetic, this estimator is symmetric and positive semidefinite. However, in finite-precision computation, small asymmetries and loss of positive definiteness may arise due to numerical error and finite sample effects, which can prevent stable Cholesky factorization.

To address this, we apply a preprocessing pipeline consisting of symmetrization and diagonal loading. Symmetrization replaces the covariance matrix with its symmetric part, obtained by averaging it with its transpose, while diagonal loading adds a small multiple of the identity to ensure strict positive definiteness (see Appendix~\ref{app:sym} and \ref{app:loading} for details).

After preprocessing, we compute a Cholesky factorization, $\hat{\Sigma}_M(\theta_i) = \hat L_M(\theta_i) \hat L_M(\theta_i)^\top$, where $\hat L_M(\theta_i)$ denotes the Cholesky factor of $\hat\Sigma_M(\theta_i)$. This guarantees that the reconstructed covariance matrices are symmetric positive definite by construction. Let $\mathrm{vec}_{\mathrm{lt}}(\cdot)$ denote the vectorization of the lower-triangular entries of a matrix. We then construct the training target,
\begin{equation}
\hat{\ell}_M(\theta_i) = \mathrm{vec}_{\mathrm{lt}}(\hat{L}_M(\theta_i)),
\label{eq:vec_cholesky}
\end{equation}
which is the vectorized lower-triangular entries of the Cholesky factor. 

We train a neural network $f_\Sigma: \mathbb{R}^{p} \rightarrow \mathbb{R}^{T(T+1)/2}$ to approximate the true target of the vectorized entries $\ell(\theta) = \mathrm{vec}_{\mathrm{lt}}(L(\theta))$ by minimizing the empirical risk

\begin{equation}
\hat{R}_{N,M}^{\Sigma}(f_\Sigma)
=
\frac{1}{N} \sum_{i=1}^N
\| f_\Sigma(\theta_i) - \hat{\ell}_M(\theta_i) \|_2^2.
\label{eq:emp_risk_cov}
\end{equation}

At inference time the predicted Cholesky factor, $L_{\text{pred}}(\theta)$, is reconstructed from the predicted vectorized lower diagonal components. The predicted covariance is then reconstructed via $\Sigma_{\text{pred}}(\theta) =L_{\text{pred}}(\theta)L_{\text{pred}}(\theta)^\top$, and all downstream evaluations are performed on the reconstructed covariance matrices.

\section{Effect of Monte Carlo Noise on Neural Network Targets}
\label{methods: effect_on_noise}
We consider a fixed computational budget 
\begin{equation}
\label{eq: fixed_budget}
    B = MN,
\end{equation}
corresponding to the total number of simulated CTMC trajectories used to construct the training data. Under this constraint, increasing the number of trajectories per parameter $M$ reduces Monte Carlo noise in the targets, while increasing the number of parameter samples $N$ improves coverage of the parameter space.

Because the training targets are constructed from Monte Carlo estimators based on a finite number of trajectories, both learning problems are affected by sampling noise. However, this noise influences the mean and covariance mappings in fundamentally different ways. We summarize these effects below; detailed derivations are provided in Appendix~\ref{app:effect_on_mc_noise}.

\subsection{Mean mapping}

For the mean network, the targets $\hat{\mu}_M(\theta)$ defined in \eqref{eq:mc_mean} are unbiased estimators of the true mean $\mu(\theta)$. Under squared loss, training on these targets is equivalent to regression with additive noise.

At the population level, this noise contributes only an additive variance term that scales as $1/M$, without altering the underlying regression function. Consequently, the optimal predictor remains the true mapping $\theta \mapsto \mu(\theta)$ (see Appendix~\ref{app:mc_effect_mean_learner}).

Under a fixed budget $B$, increasing $M$ reduces the variance of the targets, while increasing $N$ improves the approximation of the population risk. Since the regression function itself is unaffected by Monte Carlo noise, allocating more resources toward larger $N$ may be preferable for improving generalization of the mean network.

\subsection{Covariance mapping}

The covariance learner differs fundamentally due to the nonlinear transformation applied to the Monte Carlo estimates. While the sample covariance $\hat{\Sigma}_M(\theta)$ is an unbiased estimator of $\Sigma(\theta)$, the training targets are obtained after preprocessing, Cholesky factorization, and vectorization (see \eqref{eq:vec_cholesky}).

This nonlinear transformation to construct the targets introduces bias and the resulting targets are not unbiased estimators of the population quantity $\ell(\theta) = \mathrm{vec}_{\mathrm{lt}}(L(\theta))$. As a result, the learning problem is effectively centered around a finite-$M$ regression function that depends on the Monte Carlo sample size.

Unlike the mean case, Monte Carlo noise therefore affects both the variability of the targets and the target function itself. As $M \to \infty$, this bias vanishes and the effective regression function converges to $\ell(\theta)$ under suitable regularity conditions (see Appendix~\ref{app:mc_effect_cov_learner}).

Under a fixed budget $B$, increasing $M$ improves covariance learning in two ways. First, it reduces variance in the empirical covariance estimator and second, it decreases the bias induced by the nonlinear transformation. This suggests that larger values of $M$ are more critical for the covariance learner than for the mean learner.

Overall, Monte Carlo noise acts as pure variance in the mean mapping, but introduces both variance and bias in the covariance mapping. Consequently, the two learning problems admit fundamentally different resource allocation strategies under a fixed budget $B$. These trade-offs are investigated empirically in the Results section.

\section{Experimental Setup}

We evaluate the proposed framework using the stochastic \ac{SIR} model as a representative \ac{CTMC}, which provides a challenging test case due to its nonlinear stochastic dynamics, including outbreak variability and extinction events, as well as the absence of closed-form moment equations (see Appendix~\ref{app:SIR_moment_derive}). These features make it a suitable benchmark for assessing the ability of neural networks to learn parameter-to-moment mappings from simulation data.

We focus on the infected population and aim to learn its mean and covariance over the discrete time grid \(t = 0, \dots, 14\), excluding the deterministic initial time point. The system is initialized at \((S, I, R) = (763, 3, 0)\), and all simulations are performed from this fixed initial condition. This choice of initial conditions is motivated by the famous English boarding school influenza outbreak dataset, one of the classic datasets used in infectious disease modeling to study the spread of epidemics in closed populations \citep{Anonymous1978Influenza}. The model is parameterized by the transmission rate \(\alpha\) and recovery rate \(\beta\), and we consider a biologically relevant parameter region where $\beta \in (0.00125,0.00325)$ and $\alpha \in (.1,.9)$. Under this setup, the mean corresponds to a trajectory of length 13, while the covariance is a \(13 \times 13\) matrix, making the learning of second-order dependence moderately high-dimensional and nontrivial. 

Both the mean and covariance mappings are approximated using fully connected multilayer perceptrons (MLPs) with input \(\theta \in \mathbb{R}^2\), three hidden layers of width 128, and ReLU activation functions. A dropout rate of 0.05 is applied during training to improve generalization, and inputs and outputs are standardized using statistics computed on the training data, with the same transformations applied to validation and test sets. Training is performed in Pytorch using the AdamW optimizer with learning rate \(10^{-3}\), weight decay \(10^{-5}\), and batch size 64, with early stopping based on validation loss using a patience of 50 epochs \citep{Kingma2015Adam, Paszke2019PyTorch}. Model predictions are evaluated against a test set of 1000 parameter values sampled via \ac{LHS} with high-fidelity reference moment estimates computed using \(10^5\) \ac{CTMC} simulations per parameter value.

Although mean and covariance learners are trained to minimize the empirical risk functions (\ref{eq:mean_loss}) and (\ref{eq:emp_risk_cov}) respectively, we will use additional measures in practice to see how well the predicted moments accurately reflect the population moments.

For a mean mapping model trained on $N$ parameters and $M$ replicates per parameter, the error at each individual test parameter is evaluated using the trajectory \ac{RRMSE}, defined below as

\begin{equation}
    \text{RRMSE}(\theta) =\frac{||\hat \mu(\theta)- \mu(\theta)||_2}{||\mu(\theta)||_2 + \epsilon}
    \label{eq:rel_RMSE}
\end{equation}
where $\hat \mu(\theta)$ is the predicted mean and $\mu(\theta)$ is the high fidelity reference mean.

For a covariance mapping model trained on $N$ parameters and $M$ replicates per parameter, the error at each individual test parameter is evaluated by the \ac{RFE}, defined below as

\begin{equation}
\begin{aligned}
\text{Frob}_{\text{rel}}(\theta)
&= \frac{||\hat\Sigma(\theta)-\Sigma(\theta)||_{F}}{||\Sigma(\theta)||_{F} + \epsilon}, \\
||A||_{F} &= \sqrt{\sum_{i,j} A_{i,j}^2}
\end{aligned}
\label{eq:cov_rel_F}
\end{equation}

where $\hat\Sigma(\theta)$ is the reconstructed covariance from prediction and $\Sigma(\theta)$ is the high fidelity reference covariance.

To characterize the overall performance on the trained networks over the test sets, for the mean mapping we define
\begin{equation}
    \mathrm{RRMSE}_{\text{mean}} = \frac{1}{1000} \sum_{p = 1}^{1000} \mathrm{RRMSE}(\theta_\text{test,p})
    \label{eq: avg_rel_RMSE}
\end{equation}
For the covariance mapping we define,
\begin{equation}
    \mathrm{RFE}_{\mathrm{mean}}= \frac{1}{1000} \sum_{p = 1}^{1000} \mathrm{RFE}(\theta_\text{test,p})
    \label{eq: cov_avg_rel_F}
\end{equation}
which is the average \ac{RRMSE} and \ac{RFE} over the 1000 test parameters denoted $\theta_\text{test,p}$ respectively. 

\subsection{Impact of Computational Budget Experiment Details}
\label{experiment: impact_of_budget}
One goal of this experiment is to assess how different $(M,N)$ training configurations impact overall network performance on the test set. Recall from Section~\ref{methods: effect_on_noise}, $M$ is the number of \ac{CTMC} simulations used to generate the sample estimates at a given parameter value, and $N$ is the number of different parameter locations. To this end, we train multiple mean and covariance networks across a range of $(M,N)$ configurations. For the covariance model, we construct many training data sets using the prodecure described in Section~\ref{methods: training_construction} with $N$ ranging from $25$ to $4500$ and $M$ ranging from $25$ to $5000$. All $M,N$ combinations can be viewed in Table~\ref{tab:cov_MN_config}. For the mean model, we construct many training data sets using \ref{methods: training_construction} with $N$ ranging from $100$ to $50000$ and $M$ ranging from $2$ to $200$. All $M,N$ combinations can be viewed in Table~\ref{tab:mean_MN_config}. We note that smaller values of $M$ are used for the mean model based on prior experiments and the implications of Section~\ref{methods: effect_on_noise}.

For each $(M,N)$ configuration, models are trained across multiple independent replicates. The number of training replicates was selected depending on the size of $N$. See Table~\ref{tab:mean_MN_config} and Table~\ref{tab:cov_MN_config} for all replicate choices. The number of replicates is varied across configurations to account for differences in estimator variability. In particular, configurations with smaller training budgets $B$ (Eq.~\ref{eq: fixed_budget}) exhibit higher variability due to increased stochastic noise in the training data, and therefore require a larger number of replicates to obtain stable estimates of performance. Conversely, larger budget configurations produce more stable estimates and require fewer replicates.

To summarize performance for a given $(M,N)$ configuration, we define the overall error as the average across replicates of the test-set error. For the mean model, this corresponds to the replicate-averaged $\mathrm{RRMSE}_{\mathrm{mean}}$, defined as
\begin{equation}
    \mathrm{RRMSE}_{\mathrm{rep}}(M,N) = \frac{1}{R} \sum_{k = 1}^{R} \mathrm{RRMSE}_{\mathrm{mean,k}}
    \label{eq: mean_avg_rep_avg_rel_RMSE}
\end{equation}
and for the covariance model, the replicate-averaged $\mathrm{RFE}_{\mathrm{mean}}$ defined as 
\begin{equation}
    \mathrm{RFE}_{\mathrm{rep}}(M,N) = \frac{1}{R} \sum_{k = 1}^{R} \mathrm{RFE}_{\mathrm{mean}, k}
    \label{eq: cov_avg_rep_avg_rel_F}
\end{equation}
Lastly, in addition to the main metrics listed above, we have additional metrics that are used for performance of single test parameters, and for plotting purposes defined in Appendix~\ref{app:additional_metrics}.

\section{Results}

\subsection{Neural networks accurately learn the mean and covariance}

\begin{figure*}[ht!]
    \centering
    \includegraphics[width=\textwidth]{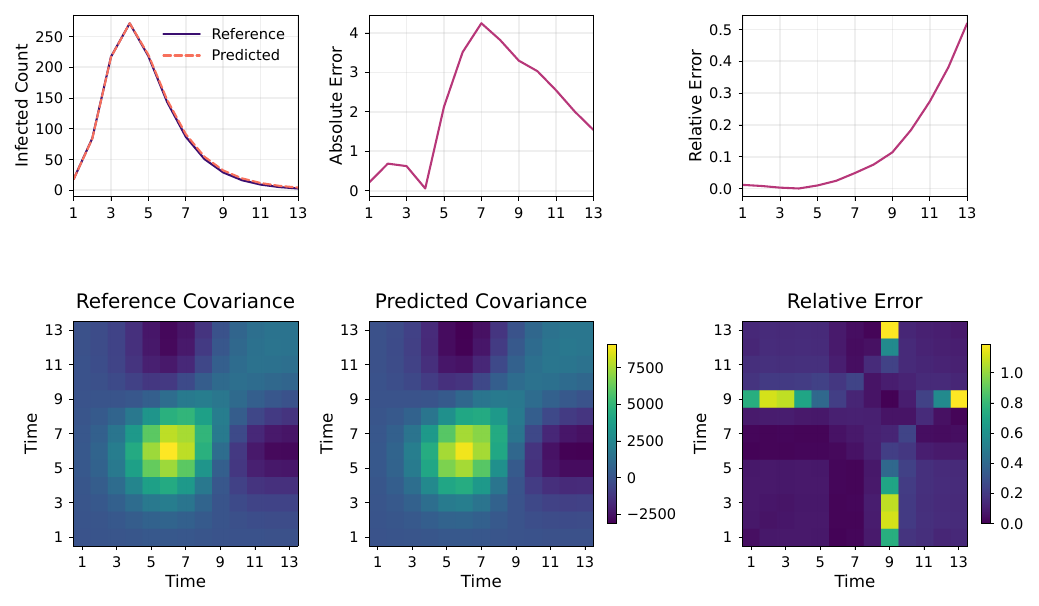}
    \caption{The top row shows the learned mean trajectory compared to the reference mean for $(\alpha,\beta \times 5\times 10^{-3}) = (0.632, 0.642)$, along with the absolute error (middle) and relative error (right) at each time point. The bottom row shows the reference covariance matrix (left) and predicted covariance matrix (middle) for $(\alpha,\beta \times 5\times 10^{-3}) = (0.138, 0.274)$, with the entry-wise relative error shown on the right.}
    \label{fig:example_mean_cov}
\end{figure*}

We first show that the neural networks can accurately learn both the mean trajectory and covariance structure under a fixed computational budget. We consider a setting where the total number of \ac{CTMC} simulations used to generate training data, $B$, is on the order of $10^5$, and train the mean and covariance networks within this budget.

For the mean model, accuracy is evaluated at each test parameter using the trajectory \ac{RRMSE} defined in equation~\ref{eq:rel_RMSE}. The median trajectory \ac{RRMSE} is $2.0\%$, with a mean of $3.2\%$ and a 95th percentile of $8.5\%$. This indicates that the network accurately recovers the mean trajectory for the vast majority of parameter settings, with only a small fraction of cases exhibiting larger errors. Notably, higher errors are localized to regions of the parameter space corresponding to large $\alpha$ and small $\beta$, where the epidemic dynamics are more variable and extinction events are more frequent. A full visualization of error across the parameter space is provided in 
Supplementary Figure 1. Overall, these results demonstrate that, under the given computational budget, the neural network reliably captures the mean behavior across the parameter space.

For the covariance model, accuracy is evaluated at each test parameter using the \ac{RFE} error defined in equation~\ref{eq:cov_rel_F}. Using the calculated test errors, the median \ac{RFE} is $7.8\%$, with a mean of $8.8\%$ and a 95th percentile of $15.2\%$. This indicates that the network is able to recover the covariance structure with reasonable accuracy across most parameter settings, although errors are generally larger than those observed for the mean model. A small subset of parameters exhibits substantially higher error, as reflected in the upper tail of the distribution. In particular, approximately $29.5\%$ of parameters have relative error exceeding $10\%$, while only $2.4\%$ exceed $20\%$, indicating that large errors are relatively rare. A visualization of the error across the parameter space, highlighting regions of higher covariance estimation error, is provided in Supplementary Figure 2. Overall, these results demonstrate that, under the given computational budget, the neural network is able to capture the covariance structure across the parameter space, albeit with greater variability than in the mean predictions.

To complement these global results, we select representative test parameters for both the mean and covariance models whose errors are closest to the median of their respective global error distributions. For the mean, the selected parameter is $(\alpha,\beta \times 5\times 10^{-3}) = (0.632, 0.642)$, with a \ac{RRMSE} of approximately $2.0\%$. For the covariance, the selected parameter is $(\alpha,\beta \times 5\times 10^{-3}) = (0.138, 0.274)$, with a \ac{RFE} of $7.8\%$. This selection ensures that the displayed results reflect typical model performance rather than a best- or worst-case scenario.

Figure~\ref{fig:example_mean_cov} (top row) shows the predicted mean trajectory alongside the reference trajectory, along with the corresponding absolute (computed using (\ref{eq: mean_entrywise_abs_error})) and relative errors (\ref{eq: mean_entrywise_rel_error})) across time. The model accurately captures the overall trajectory, with small absolute errors throughout the evaluated time points (mean absolute error $\approx 2.13$, maximum $\approx 4.25$), even near the peak of the epidemic. While the relative error appears to increase at later time points, this is primarily due to normalization by small true values as the trajectory decays toward zero, rather than a degradation in model accuracy.

The bottom row of Figure~\ref{fig:example_mean_cov} compares the predicted and reference covariance matrices, along with the entrywise relative error (computed using (\ref{eq:cov_entrywise_rel_error})). The overall covariance structure is well captured and nearly indistinguishable when looking at the true versus predicted matrices. The largest errors are localized around time point 9, corresponding to a transitional phase in the epidemic where the covariance structure shifts from high to low variability. In this regime, both the magnitude and structure of the covariance change rapidly, making the underlying relationships more difficult for the network to approximate accurately. Additionally, smaller covariance values in this region amplify relative error, further contributing to the observed discrepancies. Overall, these results demonstrate that the model accurately recovers both the mean trajectory and covariance structure for representative parameter settings.

While these results demonstrate that the neural networks can accurately learn the mean and covariance under a fixed computational budget of $10^5$, an important question remains: how does performance depend on the amount of computational resources used to generate training data?

\subsection{Increasing simulation budget reduces error}

\begin{figure*}[ht!]
    \centering
    \includegraphics[width=\textwidth]{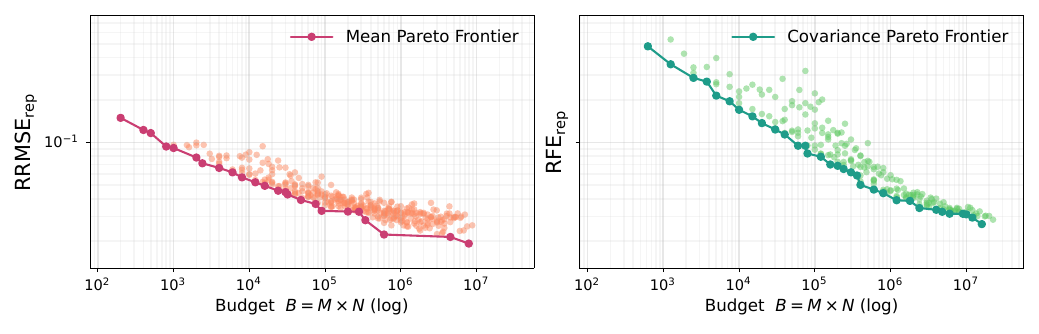}
    \caption{Each point corresponds to a choice of $M$ and $N$, where $M$ is the number of \acf{CTMC} simulations per parameter and $N$ is the number of parameter settings, with total computational budget $B$ (Eq.~\ref{eq: fixed_budget}). The left panel shows mean evaluation error (Eq.~\ref{eq: mean_avg_rep_avg_rel_RMSE}), and the right panel shows covariance evaluation error (Eq.~\ref{eq: cov_avg_rep_avg_rel_F}). The Pareto frontiers indicate the best achievable performance at each budget level.}
    \label{fig:error_vs_budget}
\end{figure*}

The dominant computational cost of our approach lies in generating training data via \ac{MC} simulation, so understanding how predictive accuracy scales with total simulation budget is critical. In this section, we study how test error depends on the total simulation budget $B$ (Eq.~\ref{eq: fixed_budget}) used to generate training data. As explained in section \ref{experiment: impact_of_budget}, this is done by training many neural networks with varying $M$,$N$ configurations and replicates, and evaluating the performance of each trained network. This allows us to explore training budgets, $B$, ranging from approximately \(10^2\) to \(10^7\) total simulations. For the mean, evaluation error at a single $M,N$ configuration is computed using the replicate-averaged $\mathrm{RRMSE}_{\mathrm{mean}}$ defined in Eq.~\ref{eq: mean_avg_rep_avg_rel_RMSE}. For the covariance, evaluation error is measured using the replicate-averaged $\mathrm{RFE}_{\mathrm{mean}}$ defined in Eq.~\ref{eq: cov_avg_rep_avg_rel_F}. 

Figure~\ref{fig:error_vs_budget} shows evaluation error versus computational budget $B$. Each point corresponds to a network trained with different configurations $(M,N)$. The Pareto frontier highlights the best achievable evaluation error at each budget.

Across several orders of magnitude in compute, we observe a clear and consistent decrease in evaluation error as the total simulation budget increases for both the mean and covariance. For the mean, the evaluation error decreases from approximately \(0.149\) at low budgets to \(0.0192\) at the highest budgets considered, corresponding to a \(7.8\times\) reduction. Similarly, the evaluation error for the covariance decreases from \(0.479\) to \(0.0264\), an \(18.1\times\) reduction. This trend is further illustrated in the 
Supplementary Material Figure 3, which shows how error across the parameter space becomes both smaller in magnitude and more uniform as the simulation budget increases.

At higher budgets, the rate of improvement slows, indicating diminishing returns. For example, beyond \(B \approx 9 \times 10^4\), the mean learner prediction error decreases more modestly from \(0.0328\) to \(0.0223\) over roughly an order of magnitude increase in compute. A similar trend is observed for the covariance, where beyond \(B \approx 8 \times 10^4\), the prediction error decreases from \(0.0836\) to \(0.0438\).

Although total computation strongly governs overall accuracy, we observe a noticeable variation in error between configurations with similar values of \(B\), as reflected by the spread of points in Figure~\ref{fig:error_vs_budget}. For example, near \(B \approx 10^5\), the mean error ranges from approximately \(0.0328\) to \(0.0573\), while the covariance error ranges from \(0.0698\) to \(0.321\). This variability indicates that total budget alone does not fully determine performance.

These results demonstrate that total simulation budget is the primary driver of predictive accuracy. However, because each budget \(B\) is achieved through different combinations of \(M\) and \(N\), the observed variation in error at fixed \(B\) suggests that evaluation error also depends on how this budget is allocated between the number of parameter settings, $N$, and the number of \ac{MC} replicates per parameter, $M$. We investigate this effect in detail in the next section.

\subsection{Allocation of simulation budget impacts performance}

\begin{figure*}[ht!]
    \centering
    \includegraphics[width=\textwidth]{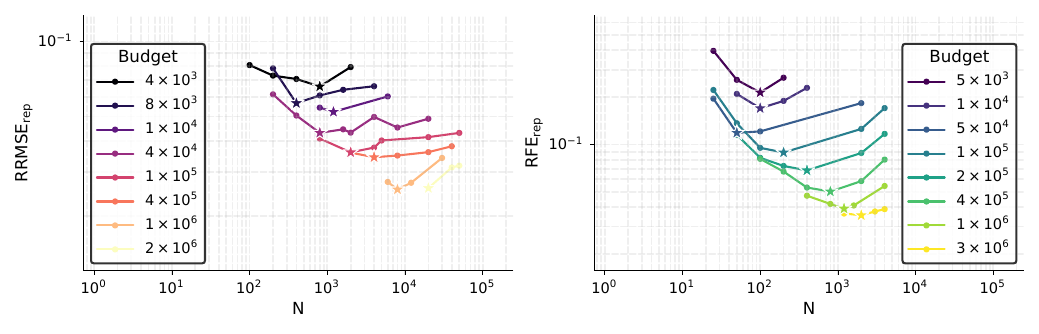}
    \caption{The left panel shows the mean network evaluation errors (Eq.~\ref{eq: mean_avg_rep_avg_rel_RMSE}), while the right panel shows the covariance network evaluation errors (Eq.~\ref{eq: cov_avg_rep_avg_rel_F}). Each point corresponds to a network trained with a specific \((M,N)\) configuration, and each colored curve represents a fixed total computational budget $B$ (Eq.~\ref{eq: fixed_budget}), with the horizontal axis indicating how that budget is allocated across \(N\) parameter settings.}
    \label{fig:budget_curves}
\end{figure*}

\begin{figure*}[ht!]
    \centering
    \includegraphics[width=\textwidth]{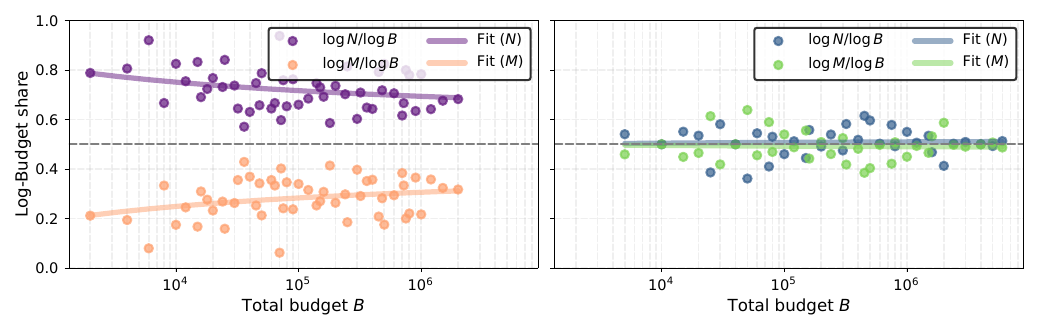}
    \caption{Optimal log-budget allocations for training data construction under a fixed computational budget $B$ (Eq.~\ref{eq: fixed_budget}). Each point shows the allocation $(M,N)$ in that minimizes the error (Eq.~\ref{eq: mean_avg_rep_avg_rel_RMSE} for the mean networks and Eq.~\ref{eq: cov_avg_rep_avg_rel_F} for the covariance networks) at a fixed budget $B$, expressed as the shares $\log N / \log B$ and $\log M / \log B$. Solid lines show fitted trends under the fixed-budget constraint. The dashed line at $0.5$ indicates a balanced allocation, $M \sim N \sim B^{1/2}$.}
    \label{fig:optimal_allocations}
\end{figure*}

We now investigate how the allocation of simulation budget between the number of parameter settings, \(N\), and the number of \ac{MC} replicates per parameter, \(M\), affects performance at fixed total budget $B$.

At a high level, this allocation induces a trade-off between coverage of the parameter space (increasing \(N\)) and reduction of \ac{MC} noise (increasing \(M\)). While the previous section showed that total budget determines the overall scale of achievable error, here we show that the way this budget is allocated plays a critical role in determining performance.

In Figure~\ref{fig:budget_curves}, we examine a range of representative fixed simulation budgets $B$ that contain multiple \((M,N)\) configurations used for model training. Each colored curve corresponds to a representative fixed budget \(B\), and shows the the evaluation errors (for mean networks Eq.~\ref{eq: mean_avg_rep_avg_rel_RMSE} and for covariance networks Eq.~\ref{eq: cov_avg_rep_avg_rel_F}) across different allocations of \(M\) and \(N\) satisfying this constraint. Because \(B\) is fixed along each curve, increasing \(N\) necessarily decreases \(M\), and vice versa. Each curve therefore represents a trade-off between allocating compute toward exploring more parameter settings (impact of $N$) or toward reducing \ac{MC} noise (impact of $M$) at each parameter.

Across all budgets, we observe that error is not monotonic in \(N\). Instead, the curves exhibit a clear U-shaped behavior, indicating that both extremes (too few parameter settings or too few \ac{MC} replicates per parameter) lead to suboptimal performance. Allocating too much budget to \(N\) results in insufficient sampling per parameter and high \ac{MC} noise, while allocating too much to \(M\) limits coverage of the parameter space and reduces the network’s ability to generalize.

This effect is particularly pronounced for the covariance model, where the error curves exhibit a clear minimum at intermediate values of \(N\). This indicates that neither allocating the budget primarily toward parameter coverage nor toward replication alone is sufficient; instead, achieving low error requires a balance between the two. This behavior is consistent with the analysis in Section~\ref{methods: effect_on_noise}, where we showed that Monte Carlo noise affects covariance estimation through both variance and bias induced by the nonlinear transformation of empirical estimates. As a result, increasing the number of replicates per parameter (\(M\)) plays a more critical role in stabilizing the training targets, while sufficient parameter coverage remains necessary for generalization. The observed interior optimum therefore reflects a stronger bias variance trade-off between \(M\) and \(N\) for the covariance model.

For the mean, the same qualitative U-shaped behavior is present, but with a notably different structure from the covariance case. In particular, as the total budget increases, the location of the minimum consistently shifts toward larger values of \(N\), indicating that allocating additional compute toward parameter coverage yields the greatest improvement in performance. Importantly, the range of Monte Carlo replicates considered for the mean model is relatively small (\(M \in [25, 200]\)), especially compared to the covariance setting. Despite this restriction, the error curves still exhibit clear interior minima. This indicates that the optimal number of replicates lies within this small range of \(M\), and that larger values of \(M\) are not required to achieve minimal error. If additional replication were necessary, we would instead expect the minimum to occur at the boundary corresponding to the largest available \(M\).
Together with the pronounced rightward shift of the curves in \(N\), these results show that mean estimation is primarily driven by parameter-space coverage once a modest level of replication is reached. This behavior is consistent with the analysis in Section~\ref{methods: effect_on_noise}, where Monte Carlo noise in the mean mapping was shown to act purely as variance, so that increasing \(M\) beyond a certain point yields diminishing returns while increasing \(N\) continues to improve generalization.

To further quantify this effect, we extract, at each fixed budget $B$, the \((M,N)\) configuration that minimizes error (restricted to budgets with at least three configurations) and analyze how the optimal allocation scales with \(B\). Figure~\ref{fig:optimal_allocations} shows the corresponding log-budget shares of the extracted optimum points, defined as \(\log N / \log B\) and \(\log M / \log B\). A value of \(0.5\) corresponds to a balanced allocation, where \(M \sim N \sim B^{1/2}\).

Across a wide range of budgets, the optimal allocations cluster around approximately constant log-share values, suggesting a power-law relationship of the form
\[
M(B) \propto B^{\alpha_M}, \qquad N(B) \propto B^{\alpha_N}, \qquad \alpha_M + \alpha_N = 1.
\]
Because the total budget satisfies \(B = MN\), the prefactors in these relationships must also be consistent with this constraint. In particular, writing \(M(B) \approx c_M B^{\alpha_M}\) and \(N(B) \approx c_N B^{\alpha_N}\), we require \(c_M c_N \approx 1\), ensuring that the product \(MN\) scales correctly with \(B\). For the covariance, fitting this model yields$
M(B) \approx 1.15\, B^{0.48}$ and $N(B) \approx 0.87\, B^{0.52}$.

The exponents are very close to the balanced allocation \((\alpha_M, \alpha_N) = (0.5, 0.5)\), with only minor deviation. This indicates that both \ac{MC} replication and parameter coverage contribute comparably to performance, consistent with the need to balance variance reduction and generalization observed in the error curves.

For the mean, the fitted scaling exhibits a more pronounced imbalance of $M(B) \approx 0.21\, B^{0.42}$ and $N(B) \approx 4.86\, B^{0.58}$. While the exponents still sum to one, the deviation from the balanced case is larger, and the prefactors differ substantially. In particular, the small prefactor on \(M\) indicates that only a relatively small number of \ac{MC} replicates per parameter is required before diminishing returns are reached, whereas the large prefactor on \(N\) reflects the continued benefit of increasing parameter coverage. Together, these results show that the optimal allocation is not only asymptotically biased toward \(N\), but also allocates relatively little budget to replication across the range of budgets considered. This behavior is consistent with the earlier observation that mean estimation converges quickly and becomes relatively insensitive to further reductions in \ac{MC} noise.

Taken together, these results show that optimal simulation design depends not only on total computational budget, but also on how that budget is allocated. In particular, covariance estimation requires a balanced allocation between replication and parameter coverage to manage its higher sensitivity to \ac{MC} noise, whereas mean estimation quickly becomes coverage-driven, benefiting primarily from broader exploration of the parameter space.

\subsection{Amortization of simulation cost}

Now that we understand how to allocate simulation budget to train neural networks with minimal error under a fixed training cost $B$, we examine how this learned model can amortize computational cost relative to brute-force \ac{MC} estimation.

We restrict attention to $(M,N)$ configurations corresponding to near-optimal allocation of simulation budget. For each such configuration, the neural network is trained once. Thus the computational cost, $B$, is incurred entirely upfront during training.

After training, the neural network defines a global mapping from parameters to predicted moments, and can be evaluated at new parameter values at negligible additional cost. As a result, its predictive accuracy on a test set does not depend on the number of test points $K$, and the total computational cost remains fixed at the training budget $B$.

In contrast, brute-force \ac{MC} estimation must be performed independently at each parameter value. To estimate moments at $K$ target parameters with $M$ \ac{MC} replicates per parameter requires total cost $ M \times K$. Consequently, the computational cost of brute-force estimation scales linearly with the number of evaluation points.

\begin{figure*}[ht!]
    \centering
    \includegraphics[width=\textwidth]{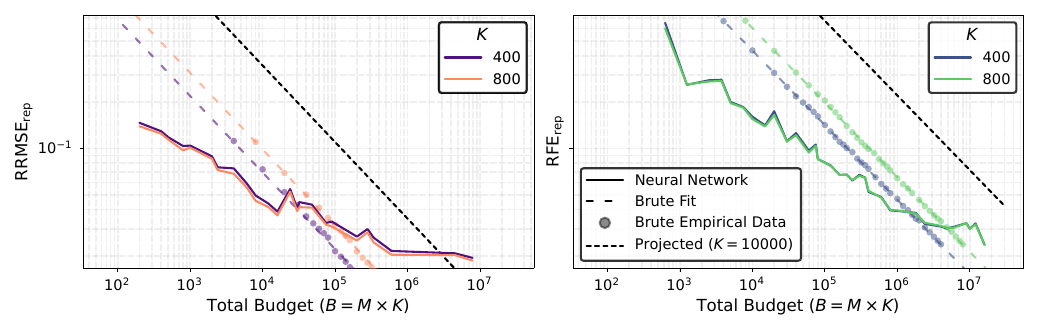}
    \caption{Learning amortization of moment estimation under increasing evaluation cost. Each color corresponds to a different number of evaluation points $K$. Solid lines show the neural surrogate error, evaluated over $K$ test points after training with optimally allocated budgets $(M,N)$ (total simulation cost $B$ (Eq.~\ref{eq: fixed_budget})), where evaluation incurs negligible additional cost. Dashed lines show the corresponding error from direct \ac{MC} simulation at $K$ evaluation points, with fitted trends. Dotted lines indicate the projected \ac{MC} cost when scaling to $K = 10^4$ evaluation points. The left panel reports mean trajectory error, while the right panel reports covariance error.}
    \label{fig:amortization}
\end{figure*}

Figure~\ref{fig:amortization} shows test error as a function of total compute for both approaches, for varying numbers of target parameters $K \in \{400, 800\}$. The error on the vertical axis is evaluated over a test set of $K$ parameter values.

For the neural network, each curve corresponds to a model trained at a given budget $B$, and remains unchanged as $K$ varies, since evaluation requires no additional simulation and predictive accuracy is stable across test set sizes.

For brute-force \ac{MC}, each curve corresponds to a fixed number of target parameters $K$. Moving along a curve reflects increasing the number of \ac{MC} replicates $M$ per parameter, which increases the total budget $M \times K$ and reduces error. The horizontal shift between curves is driven entirely by $K$. Meaning that increasing the number of target parameters increases the total compute proportionally, shifting the curves to the right due to the need to recompute simulations independently at each parameter.

At small $K$, the two approaches can achieve comparable accuracy, particularly for the mean. However, as $K$ increases, brute-force estimation requires proportionally more compute to achieve the same accuracy, while the neural network reuses its fixed training cost across all evaluation points.

In principle, brute-force \ac{MC} estimation is consistent: as the number of replicates $M \to \infty$, it converges to the true population quantities. However, this asymptotic regime is not only impractical due to finite computational resources, but becomes fundamentally unattainable as the number of evaluation points $K$ grows. Since brute-force estimation must be performed independently at each parameter, achieving high accuracy across many parameter values would require computational cost that scales as $M \times K$, which quickly becomes prohibitively large.

By contrast, the neural network approach is designed for this fixed-budget regime. While its error may plateau as the training budget $B$ increases, it can achieve sufficiently low error within a given computational budget and then reuse this accuracy across an arbitrarily large number of evaluations. As a result, when dense evaluation over the parameter space is required, the neural network provides a practical advantage by delivering stable accuracy at fixed cost, whereas brute-force methods must trade off accuracy against the number of evaluation points.

Taken together, these results demonstrate a clear amortization effect: the upfront cost of generating training data and fitting the neural network can be spread across many downstream evaluations, leading to substantial computational savings as the number of target parameters increases.

\subsection{Learned moments in downstream tasks}
We finally assess whether the learned mean and covariance are jointly accurate when used in a downstream transformation. While previous sections evaluate these quantities individually, many applications require them to be used together to capture and manipulate the dependence structure of stochastic trajectories.

Whitening provides a direct and interpretable diagnostic of this joint accuracy. To evaluate this, we apply the whitening transformation
$z = \Sigma_\theta^{-1/2}(x - \mu_\theta)$,
where $x$ denotes a trajectory sampled from the CTMC at parameter $\theta$, and $\mu_\theta$ and $\Sigma_\theta$ denote the learned mean and covariance.

If the mean and covariance are correctly estimated, the transformed trajectories should have identity covariance, implying no correlation across time points. Deviations from this behavior therefore directly reflect errors in the learned moments.

In our setting, each component corresponds to the infected population at a fixed time point, so whitening removes temporal dependence across the trajectory. This yields a clear visual criterion: accurate moment estimates produce an empirical correlation matrix close to the identity, while residual structure indicates inaccuracies in the learned moments.

We fix a parameter setting $(\alpha, \beta) = (0.55, 0.55)$ and generate 300 independent trajectories from the \ac{CTMC} model. Each trajectory consists of the infected population evaluated at discrete time points $t = 1, \dots, 13$, and is therefore represented as a vector in $\mathbb{R}^{13}$. Temporal dependence in the stochastic process manifests as correlation across these time dimensions.

We apply the whitening transformation using the learned mean $\mu_\theta$ and covariance $\Sigma_\theta$, obtained from neural networks trained under a fixed simulation budget (with $M = 15, N = 6000$ for the mean model and $M = 200, N = 400$ for the covariance model). All plots in Figure~\ref{fig:whitening} are generated from these trajectories and learned moments.

We compute empirical correlation matrices across time points, where each entry reflects correlation between the infected population at two time points. Figure~\ref{fig:whitening} shows these correlation matrices before and after whitening. The raw trajectories exhibit strong temporal correlations, reflecting the underlying epidemic dynamics. After applying the whitening transformation using the learned moments, this structure is largely removed, and the resulting correlation matrix is close to diagonal.

\begin{figure*}[ht!]
    \centering
    \includegraphics[width=\textwidth]{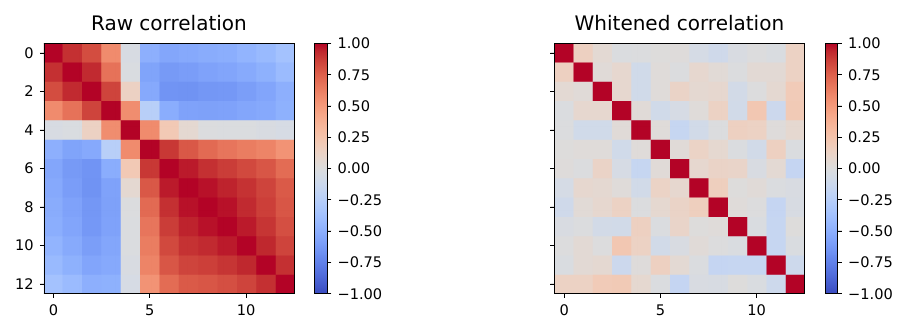}
    \caption{Empirical correlation matrices of \acf{CTMC} trajectories before and after whitening using learned moments. Each matrix shows correlation across time points for 300 simulated trajectories at $(\alpha, \beta) = (0.55, 0.55)$. The left panel shows strong temporal dependence in the raw trajectories. The right panel shows that applying the whitening transformation using the learned mean and covariance largely removes this dependence, resulting in an approximately identity correlation structure.}
    \label{fig:whitening}
\end{figure*}
While some residual correlations remain, they are small relative to the original dependence structure (maximum absolute correlation approximately $0.22$). This indicates that the learned covariance captures the dominant temporal dependencies in the system, and that the learned mean properly centers the trajectories.

Importantly, whitening removes second-order (linear) dependence but does not imply independence unless the data are Gaussian. In this setting, the underlying process is not Gaussian, and the transformed trajectories are not expected to be independent. Rather, the goal is to remove second-order dependence, which is achieved to a high degree of accuracy.

These results demonstrate that the learned mean and covariance are not only accurate individually, but also jointly consistent in a way that enables effective decorrelation of stochastic trajectories. This provides a functional validation of the learned moments, showing that they can be used together in downstream tasks that rely on removing dependence structure or standardizing model outputs.

\section{Discussion}\label{discussion}
In this work, we developed a simulation-based framework for learning parameter-to-moment mappings in \ac{CTMC} models where moments are not available in closed form. By learning these mappings from Monte Carlo outputs, we provide a more efficient approach to estimating moments across the parameter space under a fixed computational budget, rather than repeatedly simulating at each parameter value.

A central insight is that, under our chosen network architectures, Monte Carlo noise affects the mean and covariance differently, leading to different allocation strategies. Mean estimation benefits more from increased parameter-space coverage, as Monte Carlo noise primarily contributes variance without altering the underlying regression function. In contrast, covariance estimation requires additional replication to control both variance and bias introduced through nonlinear transformations. This highlights that simulation effort should be tailored to the specific moment being learned rather than applied uniformly.

We demonstrated these findings on a nontrivial stochastic system, specifically the \ac{SIR} model, focusing on a single model to study performance in depth across multiple aspects, including predictive accuracy, computational efficiency, and budget allocation. This focused analysis provides a clearer view of the underlying tradeoffs, though extending these results to additional models remains an important direction for future work.
Future work will explore how these strategies scale to more complex \ac{CTMC} models, including higher-dimensional parameter spaces and longer time horizons. Additional directions include using more expressive neural network architectures, incorporating the initial condition as an input, and developing adaptive strategies to allocate simulation effort or stop training based on downstream performance.
\backmatter

\bmhead{Supplementary information}

This article includes supplementary material provided as a separate document.

\bmhead{Acknowledgments}

This material is based upon work supported by the National Science Foundation under Grant No. DMS-2045843. MP and OPF were supported by this grant during the preparation of this work.

\section*{Declarations}

\bmhead{Funding}
This material is based upon work supported by the National Science Foundation under Grant No. DMS-2045843. MP and OPF were supported by this grant during the preparation of this work.

\bmhead{Competing interests}
The authors declare no competing interests.

\bmhead{Ethics approval and consent to participate}
Not applicable.

\bmhead{Consent for publication}
Not applicable.

\bmhead{Data availability}
All data are generated via simulation and can be reproduced using the code provided by the authors.

\bmhead{Materials availability} 
Not applicable.

\bmhead{Code availability}
All code used for simulation, analysis, and plotting results is available at \url{https://github.com/madison-git/CTMC_Moment_Learners}.

\bmhead{Author contributions}
MP conceived the study, performed the analysis, developed the methodology, implemented the code, and wrote the manuscript. OPF supervised the research and provided guidance and feedback on the manuscript.

\begin{appendices}

\section{Unbiasedness of Estimators}
\label{appendix:bruteforce_consistency}

We treat $\theta$ as fixed and suppress dependence on $\theta$ for clarity.

\subsection{Unbiasedness of Mean}

\label{app:mean_unbiased_app}
Since $\{\mathbf{Y}^{(m)}\}_{m=1}^M$ are i.i.d. with $\mathbb{E}[\mathbf{Y}^{(m)}] = \mu$, we have
\begin{equation*}
\begin{aligned}
\mathbb{E}[\hat{\mu}_M]
&= \mathbb{E}\left[\frac{1}{M}\sum_{m=1}^M \mathbf{Y}^{(m)}\right] \\
&= \frac{1}{M}\sum_{m=1}^M \mathbb{E}[\mathbf{Y}^{(m)}] \\
&= \mu.
\end{aligned}
\end{equation*}

\subsection{Unbiasedness of Covariance}
\label{app:cov_unbiased_app}
Define the centered variables $\mathbf{Z}^{(m)} := \mathbf{Y}^{(m)} - \mu$, so that $\mathbb{E}[\mathbf{Z}^{(m)}]=0$,
 $\mathrm{Cov}(\mathbf{Y}) = \mathbb{E}[\mathbf{Z}^{(m)}(\mathbf{Z}^{(m)})^\top] = \Sigma$.
We write
\begin{equation*}
\mathbf{Y}^{(m)} - \hat{\mu}_M
    =
\mathbf{Z}^{(m)} - \bar{\mathbf{Z}}, \quad
\bar{\mathbf{Z}} := \frac{1}{M}\sum_{k=1}^M \mathbf{Z}^{(k)}.
\end{equation*}
Thus,
\begin{equation*}
\hat{\Sigma}_M
=
\frac{1}{M-1}\sum_{m=1}^M
(\mathbf{Z}^{(m)} - \bar{\mathbf{Z}})
(\mathbf{Z}^{(m)} - \bar{\mathbf{Z}})^\top.
\end{equation*}
Expanding,
\begin{equation*}
\begin{aligned}
&\sum_{m=1}^M
(\mathbf{Z}^{(m)} - \bar{\mathbf{Z}})
(\mathbf{Z}^{(m)} - \bar{\mathbf{Z}})^\top\\
&\qquad = \sum_{m=1}^M \mathbf{Z}^{(m)}(\mathbf{Z}^{(m)})^\top   - M\,\bar{\mathbf{Z}}\bar{\mathbf{Z}}^\top.
\end{aligned}
\end{equation*}
Taking expectations,
\begin{equation*}
   \mathbb{E}\left[\sum_{m=1}^M \mathbf{Z}^{(m)}(\mathbf{Z}^{(m)})^\top\right]
=
M \Sigma, 
\end{equation*}
and
\begin{equation*}
\mathbb{E}\left[\bar{\mathbf{Z}}\bar{\mathbf{Z}}^\top\right]
=
\frac{1}{M}\Sigma,
\end{equation*}
since $\bar{\mathbf{Z}}$ has covariance $\Sigma/M$.
Therefore,
\begin{equation*}
\begin{aligned}
&\mathbb{E}\Bigg[
\sum_{m=1}^M
(\mathbf{Z}^{(m)} - \bar{\mathbf{Z}})
(\mathbf{Z}^{(m)} - \bar{\mathbf{Z}})^\top
\Bigg]\\
&\qquad =M\Sigma - M \cdot \frac{1}{M}\Sigma \\
&\qquad = (M-1)\Sigma.
\end{aligned}
\end{equation*}
Dividing by $M-1$ gives
\begin{equation*}
    \mathbb{E}[\hat{\Sigma}_M] = \Sigma.
\end{equation*}

\section{Structural properties of covariance estimation and parameterization}
\label{app:covariance}

This appendix justifies the preprocessing and parameterization steps used in the covariance model, including symmetrization, diagonal loading, and the Cholesky representation.

\subsection{Symmetrization of empirical covariance matrices}
\label{app:sym}
In exact arithmetic, the empirical covariance $\hat \Sigma_M (\theta)$, defined in \eqref{eq:mc_cov}, is symmetric since it is a sum of symmetric outer products.

However, in numerical representations, small asymmetries may arise due to floating-point rounding and accumulation order. These asymmetries do not reflect statistical variability, but are purely numerical artifacts.

Any matrix $A$ can be decomposed as
\[
A = \frac{1}{2}(A + A^\top) + \frac{1}{2}(A - A^\top),
\]
where the first term is symmetric and the second is antisymmetric. Since covariance matrices must be symmetric by definition, we enforce this constraint by replacing $\hat \Sigma_M(\theta)$ with its symmetric part:
\[
\hat \Sigma_M^{\mathrm{sym}}(\theta) = \frac{1}{2}(\hat \Sigma_M (\theta) + \hat \Sigma_M^\top(\theta)).
\]

This operation removes only the antisymmetric numerical artifact and leaves the symmetric component unchanged. In exact arithmetic, this step has no effect.

\subsection{Diagonal loading and positive definiteness}
\label{app:loading}
While empirical covariance matrices are positive semidefinite in theory, they may fail to be strictly positive definite in practice due to finite sample size or numerical error. This can prevent stable Cholesky factorization.
To address this, we apply diagonal loading:
\begin{equation*}
    \hat \Sigma_M \leftarrow \hat \Sigma_M + \lambda I,
    \quad \lambda > 0.
\end{equation*}
This shifts all eigenvalues of $\hat \Sigma_M$ by $\lambda$. If $\lambda$ exceeds the magnitude of any negative or near-zero eigenvalues, the resulting matrix becomes strictly positive definite. In practice, $\lambda$ is increased adaptively until the Cholesky factorization succeeds. This avoids explicit eigenvalue computation and ensures numerical stability while introducing only a small perturbation to the original estimate.

\section{The Effect of MC Noise on NN Targets}
\label{app:effect_on_mc_noise}

\subsubsection{Effect of Monte Carlo noise on the mean learner}
\label{app:mc_effect_mean_learner}
The training targets $\hat{\mu}_M(\theta)$ defined in \eqref{eq:mc_mean} are Monte Carlo estimates of the true mean $\mu(\theta)$ defined in \eqref{eq:pop_mean}. As a result, the learning problem corresponds to regression with noisy observations of the underlying function $\mu(\theta)$.

Let $\tilde{\theta} \sim \mathrm{Unif}(\Theta)$, and let $\mathbf{Y}(\tilde{\theta} )$ denote the corresponding trajectory vector defined in Section~\ref{trajectory_extraction}. We consider the population risk associated with training on Monte Carlo targets,
\begin{equation}
R_M(f)
=
\mathbb{E}\big[
\| f(\tilde{\theta} ) - \hat{\mu}_M(\tilde{\theta} ) \|_2^2
\big],
\label{eq:population_risk}
\end{equation}

\begin{proposition}
\label{prop:mc_decomposition_main}
Assume $\mathbb{E}\|\mathbf{Y}(\tilde{\theta} )\|_2^2 < \infty$. Then for any measurable function $f$,
\[
R_M(f)
=
\mathbb{E}\big[ \| f(\tilde{\theta} ) - \mu(\tilde{\theta} ) \|_2^2 \big]
+
\frac{1}{M}\, \mathbb{E}\big[ \mathrm{tr}\, \Sigma(\tilde{\theta} ) \big],
\]
where $\mu(\theta)$ and $\Sigma(\theta)$ are defined in \eqref{eq:pop_mean} and \eqref{eq:pop_cov}, respectively. See Appendix~\ref{app:pop_risk_min} for details.
\end{proposition}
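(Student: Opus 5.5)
The plan is a standard bias–variance decomposition, carried out conditionally on $\tilde\theta$. Given $\tilde\theta=\theta$, the targets $\hat\mu_M(\theta)$ are built from $M$ i.i.d. copies $\mathbf{Y}^{(m)}(\theta)$. These are drawn independently of the parameter draw except through $\theta$. First, I would note that the moment assumption $\mathbb{E}\|\mathbf{Y}(\tilde\theta)\|_2^2<\infty$ guarantees the following: for almost every $\theta$, $\mu(\theta)$ and $\Sigma(\theta)$ exist and are finite, $\mathbb{E}[\mathrm{tr}\,\Sigma(\tilde\theta)]\le \mathbb{E}\|\mathbf{Y}(\tilde\theta)\|_2^2<\infty$, and $\hat\mu_M(\tilde\theta)$ is square integrable. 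If $\mathbb{E}\|f(\tilde\theta)-\mu(\tilde\theta)\|_2^2=\infty$, both sides are $+\infty$; this follows from $\|a\|^2\le 2\|a-b\|^2+2\|b\|^2$. So I may assume that term is finite.

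Next, I would add and subtract $\mu(\tilde\theta)$:
\[
\|f(\tilde\theta)-\hat\mu_M(\tilde\theta)\|_2^2=\|f(\tilde\theta)-\mu(\tilde\theta)\|_2^2+\|\mu(\tilde\theta)-\hat\mu_M(\tilde\theta)\|_2^2+2\,(f(\tilde\theta)-\mu(\tilde\theta))^\top(\mu(\tilde\theta)-\hat\mu_M(\tilde\theta)).
\]
I would then take expectations using the tower property, conditioning on $\tilde\theta$. For the cross term, $f(\tilde\theta)-\mu(\tilde\theta)$ is $\tilde\theta$-measurable. By the unbiasedness in Appendix~\ref{app:mean_unbiased_app}, applied conditionally, $\mathbb{E}[\hat\mu_M(\tilde\theta)\mid\tilde\theta]=\mu(\tilde\theta)$, so the cross term vanishes. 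Integrability of the product holds by Cauchy–Schwarz, since both factors are in $L^2$.

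For the middle term, I would compute $\mathbb{E}[\|\hat\mu_M(\theta)-\mu(\theta)\|_2^2]=\mathrm{tr}\,\mathrm{Cov}(\hat\mu_M(\theta))$. By independence of the $M$ replicates, $\mathrm{Cov}(\hat\mu_M(\theta))=\Sigma(\theta)/M$, which is the same fact used in Appendix~\ref{app:cov_unbiased_app} for $\bar{\mathbf{Z}}$. This gives the conditional value $\mathrm{tr}\,\Sigma(\theta)/M$. Integrating over $\tilde\theta$ then yields $\frac1M\mathbb{E}[\mathrm{tr}\,\Sigma(\tilde\theta)]$.

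The only delicate step is measure-theoretic rather than computational. It is making precise the joint law of $(\tilde\theta,\mathbf{Y}^{(1)},\dots,\mathbf{Y}^{(M)})$ as a Markov kernel: conditionally on $\tilde\theta$, the replicates are i.i.d. with law that of $\mathbf{Y}(\tilde\theta)$. It also requires measurability of $\theta\mapsto\mu(\theta),\Sigma(\theta)$ so the conditional identities can be integrated. I would state this as the modeling convention and justify the measurability via Fubini applied to the kernel. Everything else is routine algebra.
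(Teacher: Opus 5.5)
Your proposal is correct and follows the paper's proof essentially line for line. Both add and subtract $\mu(\tilde\theta)$, kill the cross term by conditional unbiasedness $\mathbb{E}[\hat\mu_M(\tilde\theta)\mid\tilde\theta]=\mu(\tilde\theta)$, and evaluate the remaining term as $\mathrm{tr}(\Sigma(\tilde\theta))/M$ before integrating over $\tilde\theta$; your additional care (disposing of the case $\mathbb{E}\|f(\tilde\theta)-\mu(\tilde\theta)\|_2^2=\infty$, Cauchy--Schwarz integrability of the cross term, and the kernel/measurability setup) fills in details the paper leaves implicit without changing the route.
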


\begin{corollary}
\label{cor:mc_target_main}
The population risk minimizer of $R_M$ over all square-integrable functions is
\[
f^\star(\theta) = \mu(\theta).
\]
\end{corollary}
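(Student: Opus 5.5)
The corollary follows directly from Proposition~\ref{prop:mc_decomposition_main}, so the plan is to read off the minimizer from that decomposition. For any square-integrable $f$, the proposition gives
\[
R_M(f) = \mathbb{E}\big[\|f(\tilde\theta)-\mu(\tilde\theta)\|_2^2\big] + \frac{1}{M}\,\mathbb{E}\big[\mathrm{tr}\,\Sigma(\tilde\theta)\big].
\]
The second term does not depend on $f$. It is finite, because $\mathrm{tr}\,\Sigma(\theta) \le \mathbb{E}\|\mathbf{Y}(\theta)\|_2^2$, and averaging over $\tilde\theta$ gives a quantity bounded by the moment assumption of the proposition. So minimizing $R_M$ over $f$ is equivalent to minimizing the first term, which is nonnegative.

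First I check that $\mu$ is itself an admissible candidate. By Jensen's inequality, $\|\mu(\theta)\|_2^2 \le \mathbb{E}\|\mathbf{Y}(\theta)\|_2^2$, so $\mathbb{E}\|\mu(\tilde\theta)\|_2^2 < \infty$. Measurability of $\theta\mapsto\mu(\theta)$ is implicit in the setup, since the proposition already uses $\mu(\tilde\theta)$ inside expectations. For $f=\mu$ the first term vanishes, so
\[
R_M(\mu) = \frac{1}{M}\,\mathbb{E}\big[\mathrm{tr}\,\Sigma(\tilde\theta)\big] \le R_M(f)
\]
for every admissible $f$. Hence $f^\star=\mu$ is a minimizer.

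For uniqueness, I would note that equality $R_M(f)=R_M(\mu)$ forces $\mathbb{E}\|f(\tilde\theta)-\mu(\tilde\theta)\|_2^2=0$, that is, $f=\mu$ almost everywhere with respect to the uniform law on $\Theta$. The minimizer is therefore unique as an element of $L^2$, and I would state the corollary in that sense.

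I do not see a genuine obstacle, since all the substantive work is in Proposition~\ref{prop:mc_decomposition_main}. The only delicate points are the integrability and measurability of $\mu$, handled above, and being explicit that uniqueness holds only up to null sets.
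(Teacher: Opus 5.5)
Your proposal is correct and follows the paper's own reasoning: the corollary is read off from the decomposition in Proposition~\ref{prop:mc_decomposition_main}. The term $\frac{1}{M}\mathbb{E}[\mathrm{tr}\,\Sigma(\tilde\theta)]$ does not depend on $f$, and the remaining nonnegative term vanishes exactly when $f=\mu$, which is also how the paper argues the parallel covariance case. Your extra checks are sound refinements of the paper's one-line argument: finiteness of the variance term, square-integrability of $\mu$ via Jensen, and uniqueness only up to $\tilde\theta$-null sets.
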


Proposition~\ref{prop:mc_decomposition_main} shows that the use of Monte Carlo estimators introduces an additive variance term that scales as $1/M$, but does not alter the underlying regression function. Consequently, even though the network is trained on noisy targets, the optimal predictor remains the true mean mapping $\theta \mapsto \mu(\theta)$.

In practice, the network is trained using the empirical risk defined in \eqref{eq:mean_loss}, which approximates the population risk \eqref{eq:population_risk} using a finite set of parameter samples $\{\theta_i\}_{i=1}^N$. Proposition~\ref{prop:erm_bound_main} below shows that the generalization error is controlled by the deviation between empirical and population risk.

\begin{proposition}
\label{prop:erm_bound_main}
Let $\hat{f}_{N,M} \in \arg\min_{f \in \mathcal{F}} \hat{R}_{N,M}(f)$ and
\[
f^\star_{\mathcal{F}} \in \arg\min_{f \in \mathcal{F}} R_M(f).
\]
Then
\[
R_M(\hat{f}_{N,M}) - R_M(f^\star_{\mathcal{F}})
\le
2 \sup_{f \in \mathcal{F}} \big| R_M(f) - \hat{R}_{N,M}(f) \big|.
\]
\end{proposition}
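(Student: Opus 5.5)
This is the standard excess-risk decomposition for empirical risk minimization. The plan is to insert the empirical risk at both $\hat f_{N,M}$ and $f^\star_{\mathcal F}$ and use the optimality of $\hat f_{N,M}$ for $\hat R_{N,M}$. Concretely, I would write
\begin{align*}
R_M(\hat f_{N,M}) - R_M(f^\star_{\mathcal F})
&= \bigl[R_M(\hat f_{N,M}) - \hat R_{N,M}(\hat f_{N,M})\bigr]
+ \bigl[\hat R_{N,M}(\hat f_{N,M}) - \hat R_{N,M}(f^\star_{\mathcal F})\bigr]\\
&\quad + \bigl[\hat R_{N,M}(f^\star_{\mathcal F}) - R_M(f^\star_{\mathcal F})\bigr].
\end{align*}
The identity itself is trivial once the two empirical-risk terms are added and subtracted.

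The middle bracket is nonpositive. Both $\hat f_{N,M}$ and $f^\star_{\mathcal F}$ belong to $\mathcal F$, and $\hat f_{N,M}$ minimizes $\hat R_{N,M}$ over $\mathcal F$, so we may drop it to get an upper bound. Each outer bracket is the difference between population and empirical risk at some element of $\mathcal F$. Each is therefore bounded in absolute value by $\sup_{f\in\mathcal F}|R_M(f)-\hat R_{N,M}(f)|$. Summing the two bounds gives the factor $2$.

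The main point needing care is not analytic but one of interpretation. The statement must make sense, so the minimizers must exist, as assumed in the statement, and the risks must be finite. Finiteness follows from $\mathbb E\|\mathbf Y(\tilde\theta)\|_2^2<\infty$ in Proposition~\ref{prop:mc_decomposition_main}, together with square-integrability of $f\in\mathcal F$.

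We should also note that $\hat R_{N,M}$ in \eqref{eq:mean_loss} is a random quantity, since it depends on the sampled $\theta_i$ and the Monte Carlo draws. The inequality therefore holds pointwise for every realization, and no probabilistic argument is needed. If the supremum is infinite, the bound is vacuous but still true. I would remark that controlling this supremum, for example by uniform convergence or Rademacher complexity as $N\to\infty$, is where $N$ enters; by Proposition~\ref{prop:mc_decomposition_main}, the $1/M$ term only shifts $R_M$ by a constant independent of $f$.
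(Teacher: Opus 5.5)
Your proof is correct and matches the paper's argument: the same three-term add-and-subtract decomposition, the middle term $\hat R_{N,M}(\hat f_{N,M})-\hat R_{N,M}(f^\star_{\mathcal F})\le 0$ dropped because $\hat f_{N,M}$ minimizes the empirical risk over $\mathcal F$, and each outer term bounded by $\sup_{f\in\mathcal F}|R_M(f)-\hat R_{N,M}(f)|$. Your added remarks, that the risks must be finite and that the inequality holds for every realization of the training data, are sound points the paper leaves implicit.
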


The proposition above follows from the standard ERM analysis under uniform convergence, see \cite[Section 2.3]{shalev2014understanding} but is also proved for our given setting in Appendix~\ref{app:erm_bound_app} for convenience.

Together, these results imply that the mean network $f_\mu$, trained via \eqref{eq:mean_loss}, is learning the true moment mapping $\theta \mapsto \mu(\theta)$ defined in \eqref{eq:pop_mean}, with $M$ primarily controlling the variance of the training targets and $N$ governing generalization performance.

\subsubsection{Effect of Monte Carlo noise on the covariance learner}
\label{app:mc_effect_cov_learner}
For the covariance model, the training target is obtained by applying a nonlinear transformation to the empirical covariance estimator. In particular, the training target is the vectorized lower diagonal entries of the Cholesky factor (\ref{eq:vec_cholesky}). As a result, even though $\hat{\Sigma}_M(\theta)$ is an unbiased estimator of $\Sigma(\theta)$, the transformed target $\hat{\ell}_M(\theta)$ is generally not an unbiased estimator of $\ell(\theta)$. Thus, the covariance learner is trained on targets whose bias depends on $M$. To view the implications of this, we look at how the empirical risk (\ref{eq:emp_risk_cov}) relates to the population risk. Let $\tilde{\theta} \sim \mathrm{Unif}(\Theta)$, and define the population risk
\begin{equation}
R_M^{\Sigma}(f)
=
\mathbb{E}\Big[
\| f(\tilde{\theta}) - \hat{\ell}_M(\tilde{\theta}) \|_2^2
\Big].
\label{eq:population_risk_cov}
\end{equation}

\begin{proposition}
\label{prop:cov_target_main}
Let $g_M(\theta) = \mathbb{E}\big[\hat{\ell}_M(\theta)\big]$. Assume $\mathbb{E}\|\hat{\ell}_M(\tilde{\theta})\|_2^2 < \infty$. Then for any measurable function $f$,
\begin{equation*}
R_M^{\Sigma}(f)
=
\mathbb{E}\big[\|f(\tilde{\theta}) - g_M(\tilde{\theta})\|_2^2\big]
+
\mathbb{E}\big[\|\hat{\ell}_M(\tilde{\theta}) - g_M(\tilde{\theta})\|_2^2\big].
\end{equation*}
In particular, the population risk minimizer of $R_M^{\Sigma}$ over all square-integrable functions is
\[
f_M^\star(\theta) = g_M(\theta).
\]
Details of this proof can be seen in Appendix~\ref{app:cov_pop_min_app}.
\end{proposition}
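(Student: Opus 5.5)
The argument mirrors the proof of Proposition~\ref{prop:mc_decomposition_main}: a bias--variance decomposition obtained by conditioning on $\tilde{\theta}$. The key modelling point is the joint law of $(\tilde{\theta}, \hat{\ell}_M(\tilde{\theta}))$. First draw $\tilde{\theta} \sim \mathrm{Unif}(\Theta)$. Then, conditionally on $\tilde{\theta}=\theta$, simulate $M$ i.i.d.\ trajectories and form $\hat{\ell}_M(\theta)$ through symmetrization, diagonal loading, Cholesky factorization and $\mathrm{vec}_{\mathrm{lt}}$. Under this construction, $g_M(\theta)=\mathbb{E}[\hat{\ell}_M(\theta)]$ is exactly the conditional expectation $\mathbb{E}[\hat{\ell}_M(\tilde{\theta}) \mid \tilde{\theta}]$ evaluated at $\theta$. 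The assumption $\mathbb{E}\|\hat{\ell}_M(\tilde{\theta})\|_2^2<\infty$ guarantees that $g_M(\tilde{\theta})$ is well defined and square integrable, by conditional Jensen.

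\textbf{Step 1: add and subtract $g_M$.} For measurable $f$ with $\mathbb{E}\|f(\tilde{\theta})\|_2^2<\infty$, write
\[
f(\tilde{\theta})-\hat{\ell}_M(\tilde{\theta}) = \bigl(f(\tilde{\theta})-g_M(\tilde{\theta})\bigr) - \bigl(\hat{\ell}_M(\tilde{\theta})-g_M(\tilde{\theta})\bigr).
\]
Expanding $\|\cdot\|_2^2$ gives the two squared terms in the statement plus the cross term
\[
-2\,\mathbb{E}\bigl[(f(\tilde{\theta})-g_M(\tilde{\theta}))^\top(\hat{\ell}_M(\tilde{\theta})-g_M(\tilde{\theta}))\bigr].
\]

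\textbf{Step 2: kill the cross term.} Condition on $\tilde{\theta}$ via the tower property. The factor $f(\tilde{\theta})-g_M(\tilde{\theta})$ is $\sigma(\tilde{\theta})$-measurable, so it pulls out of the conditional expectation. What remains is $\mathbb{E}[\hat{\ell}_M(\tilde{\theta})-g_M(\tilde{\theta})\mid\tilde{\theta}]=0$. Integrability of the cross term follows from Cauchy--Schwarz, since both factors are square integrable. If $\mathbb{E}\|f(\tilde{\theta})\|_2^2=\infty$, both sides are $+\infty$, so the identity holds trivially.

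\textbf{Step 3: the minimizer.} The second term in the decomposition does not depend on $f$. The first term is nonnegative and vanishes at $f=g_M$, which is admissible because $g_M(\tilde{\theta})$ is square integrable. Hence $f_M^\star=g_M$, unique up to $\tilde{\theta}$-null sets.

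\textbf{Main obstacle.} The algebra is routine. The subtle point is measurability: $\theta\mapsto g_M(\theta)$ must be a measurable function, and $\hat{\ell}_M$ must be measurable in (parameter, randomness). The adaptive diagonal loading makes the map from $\hat{\Sigma}_M$ to $\hat{\ell}_M$ only piecewise continuous, so this needs a brief justification. I would handle it by realizing the simulations as a measurable function $h(\theta,\omega)$ of $\theta$ and independent randomness $\omega$, with $\omega$ independent of $\tilde{\theta}$. The Gillespie algorithm driven by i.i.d.\ uniforms provides such a realization. The Cholesky map is continuous on positive definite matrices, and the loading rule, being a countable search over candidate values of $\lambda$, is Borel. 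Fubini then gives both that $g_M(\theta)=\int h(\theta,\omega)\,dP(\omega)$ is measurable and that this $g_M$ is a version of the conditional expectation used in Step 2.
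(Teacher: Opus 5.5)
Your proof is correct and follows the paper's argument: add and subtract $g_M$, expand the square, kill the cross term by conditioning on $\tilde{\theta}$ via $\mathbb{E}[\hat{\ell}_M(\tilde{\theta})\mid\tilde{\theta}]=g_M(\tilde{\theta})$, and read off the minimizer because the second term does not depend on $f$. The paper does not spell out your extra points: integrability of the cross term, the infinite-risk case, uniqueness only up to $\tilde{\theta}$-null sets, and joint measurability of $\hat{\ell}_M$. These tighten the argument without changing the route.
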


Proposition~\ref{prop:cov_target_main} shows that, unlike the mean learner, the covariance learner is optimized toward the finite-M regression function $g_M(\theta)$, rather than the population target $\ell(\theta)$. Thus, Monte Carlo noise does not enter the risk merely as an additive constant, but instead alters the effective target function through the nonlinear transformation.

We next show that this finite-$M$ bias vanishes asymptotically. Intuitively, as the number of Monte Carlo samples increases, the empirical covariance converges to the population covariance, and the corresponding Cholesky factor converges to the population Cholesky factor.

\begin{proposition}
\label{prop:cov_consistency_main}
Assume that $\Sigma(\theta)$ is positive definite for each $\theta \in \Theta$. 
Further assume that the sample covariance estimator $\hat{\Sigma}_M(\theta)$ converges almost surely to $\Sigma(\theta)$ as $M \to \infty$ and that the preprocessing steps are asymptotically negligible. Then $\hat{\ell}_M(\theta) \to \ell(\theta)$ almost surely as $M \to \infty$.
If, in addition, the family $\{\hat{\ell}_M(\theta)\}_{M \ge 1}$ is uniformly integrable, then
$g_M(\theta) = \mathbb{E}[\hat{\ell}_M(\theta)] \to \ell(\theta)$
as $M \to \infty$.
\end{proposition}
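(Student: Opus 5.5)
The argument is a continuous-mapping argument followed by a convergence-of-expectations step. Fix $\theta\in\Theta$ and suppress it from the notation.

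Let $\mathcal{S}^T_{++}$ denote the open cone of symmetric positive definite $T\times T$ matrices, and let $\mathrm{chol}:\mathcal{S}^T_{++}\to\mathbb{R}^{T\times T}$ map a matrix to its unique lower-triangular Cholesky factor with positive diagonal. The first step is to check that $\mathrm{chol}$ is continuous on $\mathcal{S}^T_{++}$. The entries of the factor are produced by the Cholesky recursion
\[
L_{jj}=\Bigl(A_{jj}-\sum_{k<j}L_{jk}^2\Bigr)^{1/2},\qquad L_{ij}=L_{jj}^{-1}\Bigl(A_{ij}-\sum_{k<j}L_{ik}L_{jk}\Bigr),\quad i>j .
\]
On $\mathcal{S}^T_{++}$ every radicand is strictly positive, because it equals a ratio of leading principal minors. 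Induction on $j$ then shows that each entry is a composition of polynomials, square roots of positive quantities, and divisions by nonzero quantities, hence continuous. Since $\mathrm{vec}_{\mathrm{lt}}$ is linear, $\ell=\mathrm{vec}_{\mathrm{lt}}\circ\mathrm{chol}$ is continuous at $\Sigma$, which lies in $\mathcal{S}^T_{++}$ by hypothesis.

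The second step handles the preprocessing. Write $\tilde\Sigma_M$ for the matrix after symmetrization and diagonal loading with a data-dependent $\lambda_M$. I will read the hypothesis ``asymptotically negligible'' as $\tilde\Sigma_M-\hat\Sigma_M\to0$ almost surely. In exact arithmetic this means $\lambda_M\to0$ a.s., since symmetrization has no effect there. Combined with $\hat\Sigma_M\to\Sigma$ a.s., this gives $\tilde\Sigma_M\to\Sigma$ a.s.

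Because $\mathcal{S}^T_{++}$ is open in the space of symmetric matrices, there is a neighborhood of $\Sigma$ contained in it. On the probability-one event where convergence holds, $\tilde\Sigma_M$ eventually lies in that neighborhood. Continuity of $\mathrm{vec}_{\mathrm{lt}}\circ\mathrm{chol}$ then gives $\hat\ell_M=\mathrm{vec}_{\mathrm{lt}}(\mathrm{chol}(\tilde\Sigma_M))\to\ell$ a.s., which is the first claim.

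For the second claim, almost-sure convergence implies convergence in probability. Together with uniform integrability of $\{\hat\ell_M\}$, Vitali's convergence theorem, applied coordinatewise, gives $\hat\ell_M\to\ell$ in $L^1$. Therefore
\[
\|g_M-\ell\|\le\mathbb{E}\|\hat\ell_M-\ell\|\to0 .
\]

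The main obstacle is the first step, the continuity of the Cholesky map, because it requires strict positive definiteness of the limit. Near a singular $\Sigma$ a diagonal entry $L_{jj}$ tends to zero, the division in the recursion blows up, and continuity fails. That is exactly why the hypothesis is needed. If the recursion bookkeeping becomes awkward, my fallback is the implicit function theorem applied to $F(L,A)=LL^\top-A$ on lower-triangular matrices with positive diagonal. Its derivative in $L$, the map $\dot L\mapsto\dot LL^\top+L\dot L^\top$, is injective on lower-triangular $\dot L$, and this yields a smooth, in particular continuous, Cholesky map.
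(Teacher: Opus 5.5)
Your argument is correct and follows the route the paper intends. The paper gives no formal proof, only the intuition that the empirical covariance converges and hence so does its Cholesky factor, with uniform integrability assumed so the limit passes through the expectation; your continuity of the Cholesky map on the open cone of positive definite matrices, the continuous-mapping step under your reading of ``asymptotically negligible'' preprocessing, and the Vitali step supply exactly the details that sketch leaves out.
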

Proposition~\ref{prop:cov_consistency_main} shows that although the covariance learner is trained on biased finite-$M$ targets, this bias disappears asymptotically under suitable regularity conditions. In particular, the effective regression function $g_M(\theta)$ converges to the population Cholesky target $\ell(\theta)$ as $M \to \infty$.

In practice, the covariance network is trained using the empirical risk (\ref{eq:emp_risk_cov}), which approximates the population risk \eqref{eq:population_risk_cov} using finitely many parameter samples $N$. The covariance learner behaves differently from the mean learner in two important ways. First, finite Monte Carlo sampling affects not only the variance of the training targets, but also the regression function itself through the nonlinear Cholesky transformation. Second, increasing $M$ improves covariance learning in two ways: it reduces the variability of the empirical covariance estimator and decreases the bias between the finite-$M$ target $g_M(\theta)$ and the population Cholesky target $\ell(\theta)$. This suggests that larger values of $M$ are more critical for the covariance learner than for the mean learner.

\section{Theoretical analysis for the mean learner}

\subsection{Population risk decomposition}
\label{app:pop_risk_min}
Below is a proof of Proposition~\ref{prop:mc_decomposition_main}

\begin{proof}
Using the definition of the Monte Carlo estimator in \eqref{eq:mc_mean}, write
\begin{equation*}
   f(\tilde{\theta}) - \hat{\mu}_M(\tilde{\theta})
 =
\big(f(\tilde{\theta}) - \mu(\tilde{\theta})\big)
+
\big(\mu(\tilde{\theta}) - \hat{\mu}_M(\tilde{\theta})\big).    
\end{equation*}
Then
\begin{equation*}
    \begin{aligned}
    \|f(\tilde{\theta}) - \hat{\mu}_M(\tilde{\theta})\|_2^2
&=
\|f(\tilde{\theta}) - \mu(\tilde{\theta})\|_2^2
+
\|\hat{\mu}_M(\tilde{\theta}) - \mu(\tilde{\theta})\|_2^2\\
&\qquad +
2 \left\langle f(\tilde{\theta}) - \mu(\tilde{\theta}), \mu(\tilde{\theta}) - \hat{\mu}_M(\tilde{\theta}) \right\rangle.
    \end{aligned}
\end{equation*}
Taking expectations, the cross term vanishes since
\begin{equation*}           \mathbb{E}\big[\hat{\mu}_M(\tilde{\theta}) \mid \tilde{\theta}\big] = \mu(\tilde{\theta}),
\end{equation*}
by unbiasedness of the estimator in \eqref{eq:mc_mean}.
For the variance term, since $\hat{\mu}_M(\tilde{\theta})$ is the average of $M$ independent samples of $\mathbf{Y}(\tilde{\theta})$,
\[
\mathbb{E}\|\hat{\mu}_M(\tilde{\theta}) - \mu(\tilde{\theta})\|_2^2
=
\frac{1}{M}\, \mathbb{E}\big[\mathrm{tr}\,\Sigma(\tilde{\theta})\big],
\]
where $\Sigma(\theta)$ is defined in \eqref{eq:pop_cov}.
Combining terms yields the result.
\end{proof}

\subsection{Finite-sample ERM bound}
\label{app:erm_bound_app}
Below is a proof of Proposition~\ref{prop:erm_bound_main}
\begin{proof}
Add and subtract empirical risks:
\begin{equation*}
\begin{aligned}
 R_M(\hat{f}_{N,M}) - R_M(f^\star_{\mathcal{F}})
&=
\big(R_M(\hat{f}_{N,M}) - \hat{R}_{N,M}(\hat{f}_{N,M})\big)
\\&+
\big(\hat{R}_{N,M}(\hat{f}_{N,M}) - \hat{R}_{N,M}(f^\star_{\mathcal{F}})\big)\\
&+ \big(\hat{R}_{N,M}(f^\star_{\mathcal{F}}) - R_M(f^\star_{\mathcal{F}})\big).
\end{aligned}
\end{equation*}
The middle term is non-positive since $\hat{f}_{N,M}$ minimizes $\hat{R}_{N,M}$. Thus,
\begin{equation*}
    \begin{aligned}
        R_M(\hat{f}_{N,M}) - R_M(f^\star_{\mathcal{F}})
\le
\big|R_M(\hat{f}_{N,M}) - \hat{R}_{N,M}(\hat{f}_{N,M})\big|\\
+
\big|\hat{R}_{N,M}(f^\star_{\mathcal{F}}) - R_M(f^\star_{\mathcal{F}})\big|.
    \end{aligned}
\end{equation*}
Each term is bounded by
\[
\sup_{f \in \mathcal{F}} \big| R_M(f) - \hat{R}_{N,M}(f) \big|,
\]
which gives the result.
\end{proof}

\section{Theoretical analysis for the covariance learner}

\subsection{Population Risk Decomposition and Minimizer}
\label{app:cov_pop_min_app}
Below is a proof of Proposition~\ref{prop:cov_target_main}. 

\begin{proof}
Let
\[
g_M(\theta) := \mathbb{E}[\hat{\ell}_M(\theta)].
\]
We write
\begin{equation*}
    \begin{aligned}
        f(\tilde{\theta}) - \hat{\ell}_M(\tilde{\theta})
        &=
        \bigl(f(\tilde{\theta}) - g_M(\tilde{\theta})\bigr)\\
        &\qquad \quad +
         \bigl(g_M(\tilde{\theta}) - \hat{\ell}_M(\tilde{\theta})\bigr).
    \end{aligned}
\end{equation*}
Expanding the squared norm gives
\begin{align*}
\|f(\tilde{\theta}) - \hat{\ell}_M(\tilde{\theta})\|_2^2
&=
\|f(\tilde{\theta}) - g_M(\tilde{\theta})\|_2^2
\\
&+
\|\hat{\ell}_M(\tilde{\theta}) - g_M(\tilde{\theta})\|_2^2\\ 
&
+
2\bigl(f(\tilde{\theta}) - g_M(\tilde{\theta})\bigr)^\top
\bigl(g_M(\tilde{\theta}) - \hat{\ell}_M(\tilde{\theta})\bigr).
\end{align*}
Taking expectations yields
\begin{align*}
R_M^{\Sigma}(f)
&=
\mathbb{E}\big[\|f(\tilde{\theta}) - g_M(\tilde{\theta})\|_2^2\big]\\
&+
\mathbb{E}\big[\|\hat{\ell}_M(\tilde{\theta}) - g_M(\tilde{\theta})\|_2^2\big] \\
&+
2\,\mathbb{E}\!\left[
\bigl(f(\tilde{\theta}) - g_M(\tilde{\theta})\bigr)^\top
\bigl(g_M(\tilde{\theta}) - \hat{\ell}_M(\tilde{\theta})\bigr)
\right].
\end{align*}
We now show that the cross term is zero. Conditioning on $\tilde{\theta}$,
\begin{equation*}
\begin{aligned}
\mathbb{E}\!&\left[
\bigl(f(\tilde{\theta}) - g_M(\tilde{\theta})\bigr)^\top
\bigl(g_M(\tilde{\theta}) - \hat{\ell}_M(\tilde{\theta})\bigr)
\right]\\
 &\qquad =
\mathbb{E}\!\left[
\bigl(f(\tilde{\theta}) - g_M(\tilde{\theta})\bigr)^\top
\mathbb{E}\!\left[g_M(\tilde{\theta}) - \hat{\ell}_M(\tilde{\theta}) \mid \tilde{\theta}\right]
\right].    
\end{aligned}
\end{equation*}
By definition of $g_M$,
\begin{equation*}
\mathbb{E}[\hat{\ell}_M(\tilde{\theta}) \mid \tilde{\theta}] = g_M(\tilde{\theta}),    
\end{equation*}
so the inner expectation is zero, and hence the cross term vanishes. Therefore,
\begin{equation*}
 R_M^{\Sigma}(f)
=
\mathbb{E}\big[\|f(\tilde{\theta}) - g_M(\tilde{\theta})\|_2^2\big]
+
\mathbb{E}\big[\|\hat{\ell}_M(\tilde{\theta}) - g_M(\tilde{\theta})\|_2^2\big].   
\end{equation*}
To identify the minimizer, observe that the second term does not depend on $f$. Thus minimizing $R_M^{\Sigma}(f)$ is equivalent to minimizing
\[
\mathbb{E}\big[\|f(\tilde{\theta}) - g_M(\tilde{\theta})\|_2^2\big],
\]
which is uniquely minimized by
\[
f_M^\star(\theta) = g_M(\theta).
\]
\end{proof}

\section{Moment Equations for the CTMC SIR Model}
\label{app:SIR_moment_derive}

We define the joint moments of the \ac{CTMC} \ac{SIR} model as
\begin{equation*}
M_{a,b}(t) = \sum_{s=0}^{N} \sum_{i=0}^{N} s^a i^b \, p_{s,i}(t),
\end{equation*}
where $p_{s,i}(t)$ is the probability of observing $s$ susceptible and $i$ infected individuals at time $t$.

\subsection{Derivation of the Moment Equation}

From the forward Kolmogorov equation,
\begin{equation*}
\begin{aligned}
\frac{dp_{s,i}(t)}{dt}
&= \beta (s+1)(i-1)p_{s+1,i-1}(t)\\
&\quad + \gamma (i+1)p_{s,i+1}(t)\\
&\quad - (\beta si + \gamma i)p_{s,i}(t),   
\end{aligned}
\end{equation*}

Multiplying both sides by $s^a i^b$ and summing over all $s,i$:
\begin{align*}
\frac{dM_{a,b}}{dt}
&= \sum_s \sum_i s^a i^b \frac{dp_{s,i}(t)}{dt}\\
&= \sum_s \sum_i s^a i^b
\beta (s+1)(i-1)p_{s+1,i-1}(t)\\
&\quad + \sum_s \sum_i s^a i^b\gamma (i+1)p_{s,i+1}(t)\\
&\quad- \sum_s \sum_i s^a i^b(\beta si + \gamma i)p_{s,i}(t).
\end{align*}
The first term represents infection. Letting $k = s+1$, $j = i-1$, and relabeling $(k,j)$ as $(s,i)$,
\begin{align*}
&\sum_s \sum_i s^a i^b (s+1)(i-1)p_{s+1,i-1}\\
&\qquad = \sum_s \sum_i (s-1)^a (i+1)^b si \, p_{s,i}(t)\\
&\qquad = \sum_{m=0}^{a} \sum_{n=0}^{b}
\binom{a}{m} \binom{b}{n}
(-1)^{a-m}
 \\
 &\qquad \qquad \times\sum_s \sum_i s^{m+1} i^{n+1} p_{s,i}(t)\\
&\qquad = \sum_{m=0}^{a} \sum_{n=0}^{b}
\binom{a}{m} \binom{b}{n}
(-1)^{a-m}
M_{m+1,n+1}(t).
\end{align*}
Multiplying by $\beta$ gives
\begin{equation*}
\beta \sum_{m=0}^{a} \sum_{n=0}^{b}
\binom{a}{m} \binom{b}{n}
(-1)^{a-m}
M_{m+1,n+1}(t).
\end{equation*}
The second term represents recovery. Letting $k = i+1$,
\begin{align*}
\sum_s \sum_i s^a i^b (i+1)p_{s,i+1}
&= \sum_s \sum_i s^a (i-1)^b i \, p_{s,i}(t)\\
&= \sum_{n=0}^{b}
\binom{b}{n} (-1)^{b-n}
M_{a,n+1}(t).
\end{align*}
Multiplying by $\gamma$ gives
\begin{equation*}
\gamma \sum_{n=0}^{b}
\binom{b}{n} (-1)^{b-n}
M_{a,n+1}(t).
\end{equation*}
The final term can be written directly in terms of $M$:
\begin{equation*}
\begin{aligned}
\sum_s \sum_i s^a i^b (\beta si + \gamma i)p_{s,i}(t)
&= \beta M_{a+1,b+1}(t) \\
&\quad+ \gamma M_{a,b+1}(t).
\end{aligned}
\end{equation*}
Thus, the moment equation for the SIR \ac{CTMC} is
\begin{align*}
\frac{dM_{a,b}}{dt}
&= \beta \sum_{m=0}^{a} \sum_{n=0}^{b}
\binom{a}{m} \binom{b}{n}
(-1)^{a-m}
M_{m+1,n+1}(t) \\
&\quad + \gamma \sum_{n=0}^{b}
\binom{b}{n} (-1)^{b-n}
M_{a,n+1}(t)\\
&\quad - \beta M_{a+1,b+1}(t)
- \gamma M_{a,b+1}(t).
\end{align*}

\subsection{Example: First Moment of the Infected Population}

The equation governing the first moment of $I$ is obtained by substituting $a = 0$, $b = 1$:

\begin{align*}
\frac{dM_{0,1}}{dt}
&= \beta (M_{1,1} + M_{1,2})
+ \gamma(-M_{0,1} + M_{0,2})\\
&\quad- \beta M_{1,2}
- \gamma M_{0,2}\\
&= \beta M_{1,1} - \gamma M_{0,1}.
\end{align*}
Thus,
\begin{equation*}
\frac{d}{dt}\mathbb{E}[I]
= \beta \mathbb{E}[SI] - \gamma \mathbb{E}[I].
\end{equation*}

\subsection{Example: Second-Order Moment}

Similarly, the equation governing the second-order moment $\mathbb{E}[SI] = M_{1,1}(t)$ is obtained by setting $a = 1$, $b = 1$:
\begin{align*}
\frac{dM_{1,1}}{dt}
&= \beta(-M_{1,1} - M_{1,2} + M_{2,1} + M_{2,2}) \\
&\quad + \gamma(-M_{1,1} + M_{1,2})
- \beta M_{2,2}
- \gamma M_{1,2}\\
&= \beta(-M_{1,1} - M_{1,2} + M_{2,1}) - \gamma M_{1,1}.
\end{align*}
Thus,
\begin{equation*}
\frac{d}{dt}\mathbb{E}[SI]
= \beta(-\mathbb{E}[SI] - \mathbb{E}[SI^2] + \mathbb{E}[S^2 I])
- \gamma \mathbb{E}[SI].
\end{equation*}

\subsection{Covariance}
\begin{equation*}
\mathrm{Cov}(S,I) = \mathbb{E}[SI] - \mathbb{E}[S]\mathbb{E}[I].
\end{equation*}
Taking the derivative,
\begin{equation*}
\frac{d}{dt}\mathrm{Cov}(S,I)
= \frac{d}{dt}\mathbb{E}[SI]
- \frac{d}{dt}(\mathbb{E}[S]\mathbb{E}[I]).
\end{equation*}
Since both $\frac{d}{dt}\mathbb{E}[S]$ and $\frac{d}{dt}\mathbb{E}[I]$ depend on $\mathbb{E}[SI]$, and $\frac{d}{dt}\mathbb{E}[SI]$ depends on higher-order moments, the covariance evolution also depends on higher-order moments.

\subsection{Lack of Closure}

The first moment depends on second-order moments, and second-order moments depend on third-order moments. This illustrates that the moment equations form an infinite hierarchy, where lower-order moments depend on higher-order moments, and thus are not closed without additional approximation.

\section{Additional SIR Experiment Details}
\label{app: add_exp_details}

\subsection{$M,N$ Configurations and Replicates}
\label{app: exp_configs}

For the mean model, we train neural networks using all pairings of the following parameter $N$ and replicate sizes $M$:

\begin{table}[ht!]
\centering
\caption{Mean Leaner $M$ and $N$ Training Configurations and Replicates per $N$}
\label{tab:mean_MN_config}
\begin{tabular}{|c|c|c|c|c|}
\hline
$M$ & $N$ & Rep per $N$ & $N$ cont. & Rep per $N$ cont.\\
\hline
2  & 100 & 6 & 15000 & 2\\
5  & 200 & 6 & 17000 & 2\\
10 & 400 & 5 & 20000 & 2\\
15 & 800 & 5 & 25000 & 2\\
20 & 1200 & 4 & 30000 & 2\\
25 & 1600 & 4 & 35000 & 1\\
40 & 2000 & 4 & 40000 & 1\\
50 & 3000 & 3 & 50000 & 1\\
75 & 4000 & 3 & - & -\\
90 & 5000 & 3 & - & -\\
100 & 6000 & 3 & - & -\\
125 & 7000 & 3 & - & -\\
150 & 8000 & 3 & - & -\\
175 & 9000 & 3 & - & -\\
200 & 12000 & 2 & - & -\\
\hline
\end{tabular}
\end{table}

For the covariance model, we train neural networks using all pairings of the following parameter $N$ and replicate sizes $M$:

\begin{table}[ht!]
\centering
\caption{Covariance Leaner $M$ and $N$ Training Configurations and Replicates per $N$}
\label{tab:cov_MN_config}
\begin{tabular}{|c|c|c|}
\hline
$M$ & $N$ & Rep per $N$ \\
\hline
25  & 25 & 6 \\
50  & 50 & 6 \\
75 & 100 & 6 \\
100 & 200 & 5 \\
200 & 400 & 4 \\
300 & 800 & 3 \\
500 & 1200 & 3\\
750 & 1600 & 3 \\
1000 & 2000 & 3\\
1500 & 2500 & 3 \\
2000 & 3000 & 3 \\
2500 & 3500 & 3\\
3000 & 4000 & 3\\
4000 & 4500 & 3\\
5000 & - & -\\
\hline
\end{tabular}
\end{table}

\subsection{Additional Plotting Metrics}
\label{app:additional_metrics}

We will further evaluate performance at a single test parameter by looking at the absolute error across time points given by

\begin{equation}
    e_t = |\hat \mu_t(\theta) - \mu_t(\theta)|,
    \label{eq: mean_entrywise_abs_error}
\end{equation}

as well as the entrywise relative error given by

\begin{equation}
    r_t = \frac{|\hat \mu_t(\theta) - \mu_t(\theta)|}{|y_t(\theta)| + \epsilon}.
    \label{eq: mean_entrywise_rel_error}
\end{equation}

For the covariance we look at relative entrywise error given by

\begin{equation}
    r_{i,j}(\theta) = \frac{|\hat \Sigma_{i,j}(\theta) - \Sigma_{i,j}(\theta)|}{|\Sigma_{i,j}(\theta)| + \epsilon}
    \label{eq:cov_entrywise_rel_error}
\end{equation}




\end{appendices}


\bibliography{sn-bibliography}

\end{document}


\title{Supplementary Material:
Learning Moment Maps for Continuous-Time Markov Chains under Monte Carlo Noise}

\author{Madison Pratt, Olivia Prosper-Feldman}

\maketitle

\section*{Supplementary Figures}

The figures below provide additional support for the claims presented in the Results section. Each subsection is titled to correspond with the relevant portion of the Results, indicating where the material is being supplemented

\begin{figure*}[t]
    \centering
    \includegraphics[width=\textwidth]{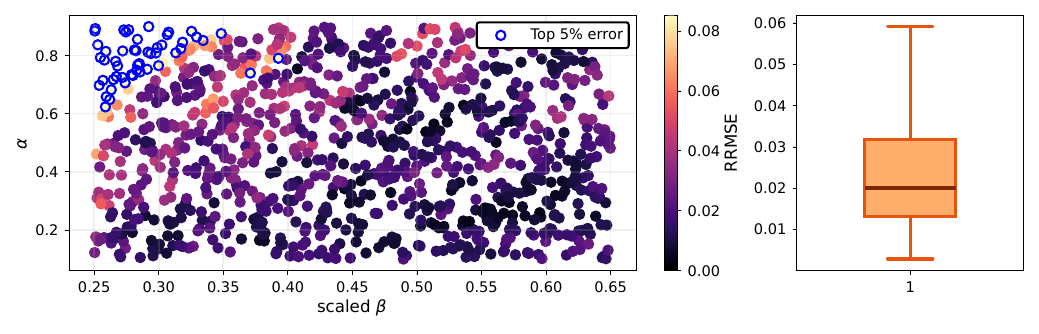}
    \caption{The left plot shows the $\text{RRMSE}$ per test parameter using the mean network trained with computational budget on the order of $10^5$. The colorbar is clipped to the 95th percentile $\text{RRMSE}$ values. The parameter values with $\text{RRMSE}$ higher than the 95th percentile are highlighted in blue. The right plot shows the distribution of the $\text{RRMSE}$.}  \label{sup_fig:mean_global_view}
\end{figure*}

\begin{figure*}[t]
    \centering
    \includegraphics[width=\textwidth]{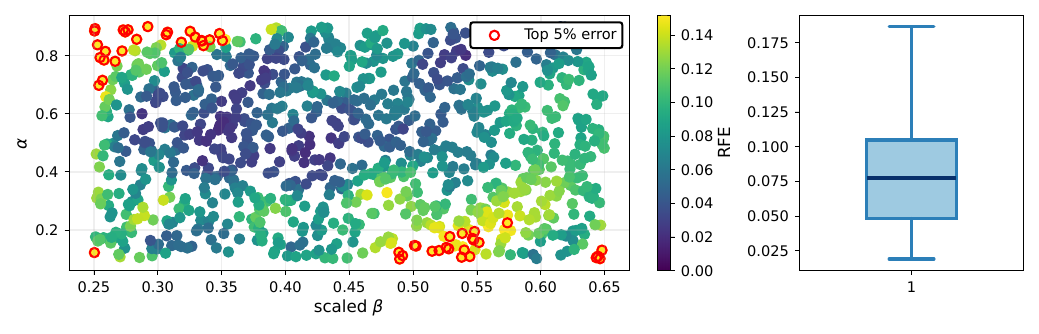}
    \caption{The left plot shows the $\text{RFE}$ per test parameter using the covariance network trained with computational budget on the order of $10^5$. The colorbar is clipped to the 95th percentile $\text{RFE}$ values. The parameter values with $\text{RFE}$ higher than the 95th percentile are highlighted in red. The right plot shows the distribution of the $\text{RFE}$.}  \label{sup_fig:cov_global_view}
\end{figure*}

\begin{figure*}[t]
    \centering
    \includegraphics[width=\textwidth]{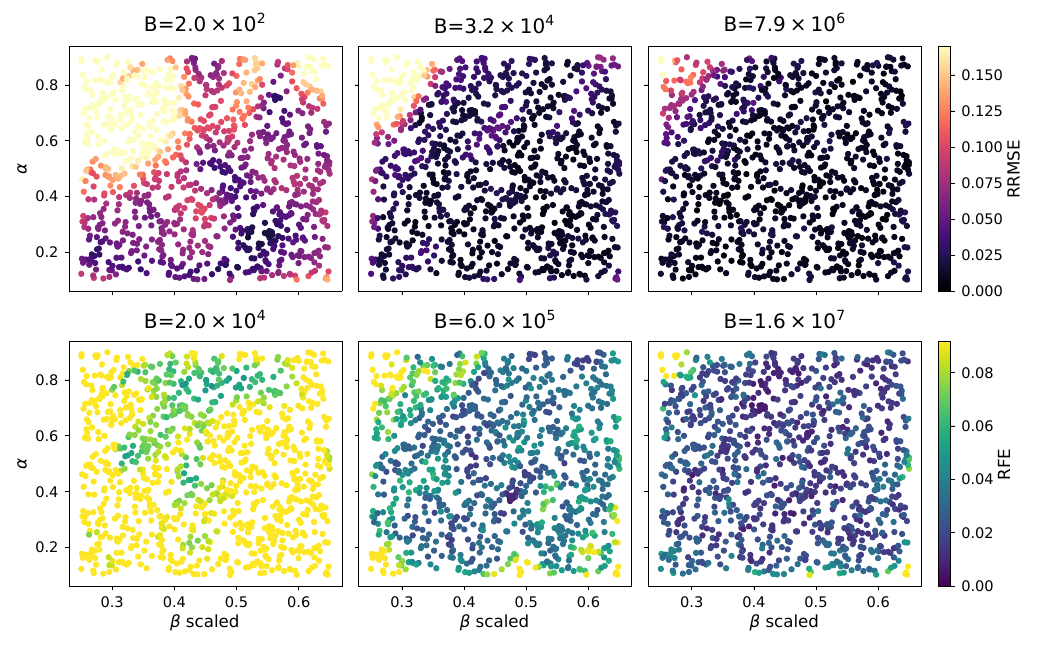}
    \caption{Top panel shows the RRMSE over the test parameter space over three different training budget sizes. Bottom panel shows the RFE over the test parameter space over three different budget sizes.}  \label{sup_fig:param_error_v_budget}
\end{figure*}